\documentclass[lettersize,journal]{IEEEtran}
\usepackage{algorithmic}
\usepackage{algorithm}
\usepackage{array}
\usepackage{soul}
\usepackage{amsmath,amsfonts,amssymb}
\usepackage{amsthm}
\usepackage[caption=false,font=normalsize,labelfont=sf,textfont=sf]{subfig}
\usepackage{textcomp}
\usepackage{stfloats}
\usepackage{url}
\usepackage{verbatim}
\usepackage{graphicx}
\usepackage[style=ieee,backend=bibtex]{biblatex}
\DeclareFieldFormat{doi}{} 
\usepackage{cleveref}
\crefname{figure}{Fig.}{Figs.}
\Crefname{figure}{Fig.}{Figs.}
\crefname{equation}{Eq.}{Eqs.}
\Crefname{equation}{Eq.}{Eqs.}
\usepackage{xcolor}
\crefname{figure}{Fig.}{Figs.}
\Crefname{figure}{Fig.}{Figs.}
\crefname{equation}{Eq.}{Eqs.}
\Crefname{equation}{Eq.}{Eqs.}

\newtheorem{lemma}{Lemma}

\Crefname{proposition}{Proposition}{Propositions}

\Crefname{corollary}{Corollary}{Corollaries}
\Crefname{lemma}{Lemma}{Lemmas}
\def\BibTeX{{\rm B\kern-.05em{\sc i\kern-.025em b}\kern-.08em
    T\kern-.1667em\lower.7ex\hbox{E}\kern-.125emX}}
\DeclareFieldFormat{doi}{} 

\usepackage{titlesec}
\titlespacing*{\section}{0pt}{1.05ex plus 0.2ex minus 0.2ex}{0.5ex plus 0.2ex}
\titlespacing*{\subsection}{0pt}{0.85ex plus 0.2ex minus 0.2ex}{0.4ex plus 0.1ex}
\titlespacing*{\subsubsection}{0pt}{0.75ex plus 0.2ex minus 0.2ex}{0.35ex plus 0.1ex}
\usepackage{indentfirst}
\newcommand{\tightsubsection}[1]{%
  \subsection{#1}\vspace{-0.4\baselineskip}%
}

\begin{document}

\title{GP Bandit-Assisted Two-Stage Sparse Phase Retrieval for Amplitude-Only Near-Field Beam Training}
\author{
Zijun Wang, Shawn Tsai, Ye Hu,~\IEEEmembership{Member,~IEEE}, Rui Zhang,~\IEEEmembership{Member,~IEEE}
\thanks{Zijun Wang, and Rui Zhang are with the Department of Electrical Engineering, The State University of New York at Buffalo, New York, USA  (email: {zwang267@buffalo.edu},{rzhang45@buffalo.edu}). Shawn Tsai is with the CSD, MediaTek Inc. USA, San Diego, CA 92122 USA (e-mail: shawn.tsai@mediatek.com). Ye Hu is with the Industrial and Systems Engineering, and Department of Electrical and Computer Engineering, University of Miami, Florida, USA (email: {yehu@miami.edu}). }
\thanks{This work was supported in part by the CSD, MediaTek Inc. USA, under Grant 103764, and in part by the National Science Foundation, under Grant ECCS 2512911.}
\thanks{Corresponding author: Rui Zhang.}
}

\markboth{Journal of \LaTeX\ Class Files,~Vol.~14, No.~8, August~2021}%
{Shell \MakeLowercase{\textit{et al.}}: A Sample Article Using IEEEtran.cls for IEEE Journals}

\IEEEpubid{0000-0000/00\$00.00~\copyright~2021 IEEE}

\maketitle

\begin{abstract}
The transition to Extremely Large Antenna Arrays (ELAA) in 6G introduces significant near-field effects, necessitating robust near-field beam training strategies in multi-path environments. Because signal phases are frequently compromised by hardware impairments such as phase noise and frequency offsets, amplitude-only channel recovery is a critical alternative to coherent beam training. However, existing near-field amplitude-based training methods often assume simplistic line-of-sight conditions. Conversely, far-field phase retrieval (PR) methods lack the sensing flexibility required to optimize training efficiency and are fundamentally limited by plane-wave models, making them ill-suited for near-field propagation.
We propose a two-stage sparse PR framework for amplitude-only near-field beam training in multipath channels. Stage I performs adaptive support discovery on the standard 2D DFT beamspace by exploiting a physics-guided prior induced by near-field beam patterns. Stage II then refines the channel estimate by restricting sensing and sparse PR to the learned subspace.
Numerical results show that the proposed adaptive pipeline consistently outperforms non-adaptive baselines, improving beamforming gain by over $70\%$ at low SNR.

\end{abstract}

\begin{IEEEkeywords}
Near-field communications, ELAA, beam training, adaptive phase retrieval, Gaussian-process bandit.
\end{IEEEkeywords}

\section{Introduction}
 \IEEEPARstart{T}{he} vision for 6G wireless networks centers on a multi-tier spectrum strategy, leveraging the upper mid-band, millimeter-wave, and Terahertz frequencies to meet the demand for Gbps-scale capacity, sub-millisecond latency, and ubiquitous coverage \cite{testolina_uppermidband_mcom24,10438977,9766110}. While these bands offer vast bandwidth, they also suffer from severe propagation and atmospheric absorption losses. To sustain a viable link budget, 6G base stations will deploy extremely large antenna arrays (ELAA) with hundreds or thousands of elements, providing the high beamforming gain necessary to overcome these path losses\cite{kang_uppermidband_ojcoms24, zhang_newmidband_mcom25}. However, the combination of massive physical apertures and short wavelengths pushes the some typical communication distances into the  near-field regime \cite{lu_nearfield_xlmimo_comst24,10942861}. In this regime, the traditional far-field plane-wave assumption is no longer valid. Instead, the channel is characterized by spherical wavefronts, where the array response depends on both the user's angular direction and their specific propagation range \cite{General_tutorial}.

This shift has motivated extensive studies on near-field channel modeling, channel estimation and beam management
\cite{Cui_and_Dai_channel_model,beam_split_wideband_beam_training,sequence_low_overhead_orth,k-b_domain_codebook}. { Under the assumption of phase-coherent measurements, near-field channel acquisition generally falls into two categories: parametric inference and compressive sparse recovery. Within the parametric framework, subspace-based methods leverage the eigenstructure of the sample covariance matrix to distinguish signal and noise subspaces, subsequently identifying path parameters through high-resolution spectrum searches. For instance, \cite{DFT_MLE_MUSIC} utilizes DFT-codebook sweeping to collect complex measurements, employing a MUSIC-type spectrum to isolate dominant angular components, followed by a matched-filter-based range scan for distance refinement. To mitigate the computational intensity of such searches, \cite{Zhang_nearfield_MUSIC} introduces a reduced-dimension near-field MUSIC formulation, enabling efficient joint angle-range estimation that scales effectively with the massive apertures of ELAAs. Complementing these subspace approaches are likelihood-driven estimation methods, which formulate channel acquisition as a maximum-likelihood estimation (MLE) problem under a spherical-wave model. Building on the polar-domain codebook proposed in \cite{Cui_and_Dai_channel_model}, a Gaussian MLE approach is used to refine off-grid Uniform Linear Array (ULA) estimates, while \cite{Scalable_Near-Field_Localization_UPA} extends this refinement to Uniform Planar Arrays (UPA). Beyond parametric fitting, the inherent sparsity of high-frequency channels allows near-field training to be framed as a compressive sensing (CS) problem. Early works in hybrid-beamforming architectures, such as \cite{lee_tcomm16_omp_hybrid_ce}, designed training precoders and combiners that utilize Orthogonal Matching Pursuit (OMP) to identify dominant spatial paths. This framework was further generalized in \cite{mendezrial_access16_hybrid_arch_cs_ce} to accommodate various hybrid hardware constraints. Transitioning to the near-field, \cite{wang2025sparsityawarenearfieldbeamtraining} exploits a polar-domain dictionary to represent the channel's joint angle-range sparsity, significantly reducing the reporting overhead of beam training compared to exhaustive two-dimensional sweeping under multi-path scenarios. }

\IEEEpubidadjcol
{ Despite the efficacy of coherent methods, phase stability is often a fragile assumption in practical transceivers. Residual synchronization errors and hardware impairments, including carrier-frequency offset, phase noise, and frequency drift, introduce unknown, time-varying distortions across training slots, preventing the coherent combination of pilot observations \cite{hu2020pccpr_tsp}. Consequently, there is a compelling need for non-coherent beam training and channel estimation strategies that rely exclusively on amplitude-only feedback. Under amplitude-only acquisition, the most intuitive baseline is codebook sweeping combined with power-based best beam selection. In near-field systems, polar-domain codebooks discretize both angle and range to provide the fine spatial focusing required for spherical-wave propagation \cite{Cui_and_Dai_channel_model}. However, exhaustive searching over a two-dimensional angle-range grid incurs prohibitive overhead. To mitigate this, hierarchical and structured codebook designs have been proposed to localize promising spatial regions before performing fine-grained refinement. For instance, \cite{Hierarchical_near-field_beam_relocate} utilizes a two-layer polar-domain codebook for low-overhead region searching, while \cite{k-b_domain_codebook} introduces a "slope-intercept" representation to decouple chirp-like parameters, organizing beams into a hierarchical search that reduces inter-codeword correlation. While effective for single-path Line-of-Sight (LoS) scenarios, these sweeping strategies often overlook multi-path environments where multiple spatial components must be resolved.}

{ In multi-path channels, robust beamforming requires recovering both the dominant paths and their associated complex combining coefficients. With amplitude-only measurements, this task evolves into a phaseless inverse problem, where the goal is to recover the channel vector $\mathbf{x}$ from magnitude measurements $y_i = |\langle \mathbf{a}_i,\mathbf{x}\rangle|$ or $y_i^2 = |\langle \mathbf{a}_i,\mathbf{x}\rangle|^2$. Classical Wirtinger Flow (WF) solvers, such as Truncated Wirtinger Flow (TWF) and
Truncated Amplitude Flow (TAF), employ spectral initialization followed by gradient-descent updates on intensity-based losses, using truncation rules to enhance robustness against outliers \cite{candes2015wf_tit, wang2018taf_tit}. However, these general solvers do not exploit the structured sparsity inherent in high-frequency wireless channels. While sparse PR variants like Sparse Wirtinger Flow (SWF) and Sparse Phase Retrieval via Truncated Amplitude Flow (SPARTA) \cite{yuan2019swf, wang2018sparta_tsp} incorporate hard-thresholding to pursue sparse supports, they are not tailored for beam training; they typically require a pre-specified sparsity level and rely on non-adaptive sensing matrices, missing the opportunity to reduce sweeping overhead through intelligent, real-time probing. Non-coherent beam training has been studied in far-field wireless systems, which is typically framed as sparse PR or phaseless decoding. Works such as \cite{li2019fast_tsp} enable the recovery of dominant angular beams from compressive magnitude measurements, while \cite{hu2020pccpr_tsp} and \cite{Qiu_TSP_far-field_phaseretrieval} handle unknown phase offsets and channel tracking through phase-modulated probing. However, these methods are built upon the plane-wave far-field model, where sparsity is defined on an angular dictionary. They fail to account for the near-field spherical-wave effect, where steering depends jointly on angle and range.}

{ Consequently, a significant gap remains: how to achieve low-overhead, non-coherent near-field beam training in multi-path environments? To address this challenge, we propose a Gaussian bandit-assisted sparse phase retrieval framework for amplitude-only near-field beam training. Our approach utilizes a two-stage design. First, the base station performs a coarse, low-overhead support discovery using DFT beams, guided by an adaptive Gaussian Process (GP) bandit policy to prioritize promising directions. Second, the base station executes sparse PR exclusively on the estimated support using a Gaussian-masked DFT sensing matrix. To the best of our knowledge, this is the first work to perform amplitude-only near-field beam training for multi-path channels with an adaptive, bandit-guided sparse phase retrieval framework.
}

The main contributions of this work are summarized as follows:
\begin{itemize}
    \item \textbf{DFT beamspace modeling for near-field channels and beam pattern analysis}: { We adopt the standard DFT codebook as the beamspace representation to ensure compatibility with far-field systems. To address the inherent near-field spatial mismatch, we provide a rigorous analysis of the DFT beam pattern under the spherical-wave effect. Specifically, we derive a closed-form expression for the lobe beamwidth under energy-splitting, revealing a structured and predictable spatial distribution that governs near-field channels in the DFT domain.}
 \item \textbf{Physically-informed kernel design and adaptive support discovery}: 
Leveraging our derived beamwidth analysis, we design a physically-motivated kernel for a Gaussian Process bandit defined over the DFT beam indices. This bandit-based policy enables an intelligent, low-overhead search for the dominant channel support (i.e., significant paths) using only non-coherent, amplitude-only measurements, significantly outperforming other baselines. 
    \item \textbf {Rician-aware amplitude-only recovery}:  We tailor the Gaussian bandit updates and the sparse phase retrieval refinement to the specific statistics of beam training. By incorporating a Rician denoising step, our framework explicitly accounts for the distribution of amplitude-only feedback in the presence of complex Gaussian noise. This adaptation yields superior reconstruction accuracy and robustness compared to conventional PR solvers that assume simplistic Gaussian noise models. Moreover, by coupling high-probability sub-Gaussian noise proxy induced by our Rician debiasing with the designed kernel prior, the Stage~I procedure admits an agnostic GP-bandit (RKHS) justification \cite{srinivas2010gpucb}.
    
\end{itemize}

The remainder of this paper is organized as follows. Section \ref{sec:system_model} introduces the near-field signal model and the amplitude-only measurement formulation. Section \ref{sec:upa_beam_pattern} analyzes the near-field channel beam pattern in the DFT space, facilitating the design of our specialized kernel. Section \ref{sec:proposed} presents the proposed framework, detailing the bandit-assisted support discovery and the sparse PR refinement with Gaussian-masked DFT sensing. Section \ref{sec:simulation} provides comprehensive simulation results, and Section \ref{sec:clu} concludes the paper.

\noindent\textit{Notation:}
Scalars, vectors, and matrices are denoted by $a$, $\mathbf{a}$, and $\mathbf{A}$.
$(\cdot)^{\mathsf T}$ and $(\cdot)^{\mathsf H}$ denote transpose and conjugate transpose.
$\|\cdot\|_2$ is the Euclidean norm and $\|\cdot\|_0$ counts nonzero entries.
$\otimes$ denotes the Kronecker product and $\Re\{\cdot\}$ denotes the real part.
$[x]_+\triangleq \max\{x,0\}$ and $\mathcal{H}_k(\cdot)$ keeps the $k$ largest-magnitude entries.
$|\mathcal{S}|$ denotes set cardinality and $\mathcal{S}^c$ denotes set complement.
$\mathcal{CN}(\mu,\Sigma)$ and $\mathcal{U}[a,b]$ denote complex Gaussian and uniform distributions, respectively.

\section{System Model}\label{sec:system_model}
\noindent\hspace*{1em}
We study a downlink narrowband scenario where a base station (BS) equipped with an
$N_y\times N_z$ UPA serves a single-antenna user.
The UPA is placed on the $y$-$z$ plane and centered at the origin.
Let $f_c$ denote the carrier frequency and $\lambda=c/f_c$ the wavelength, where $c$ is the
speed of light.
The considered geometry is illustrated in \Cref{fig:antenna}.

\begin{figure}[h]
\captionsetup{justification=justified,singlelinecheck=false} 
\centering
\includegraphics[width=0.8\linewidth]{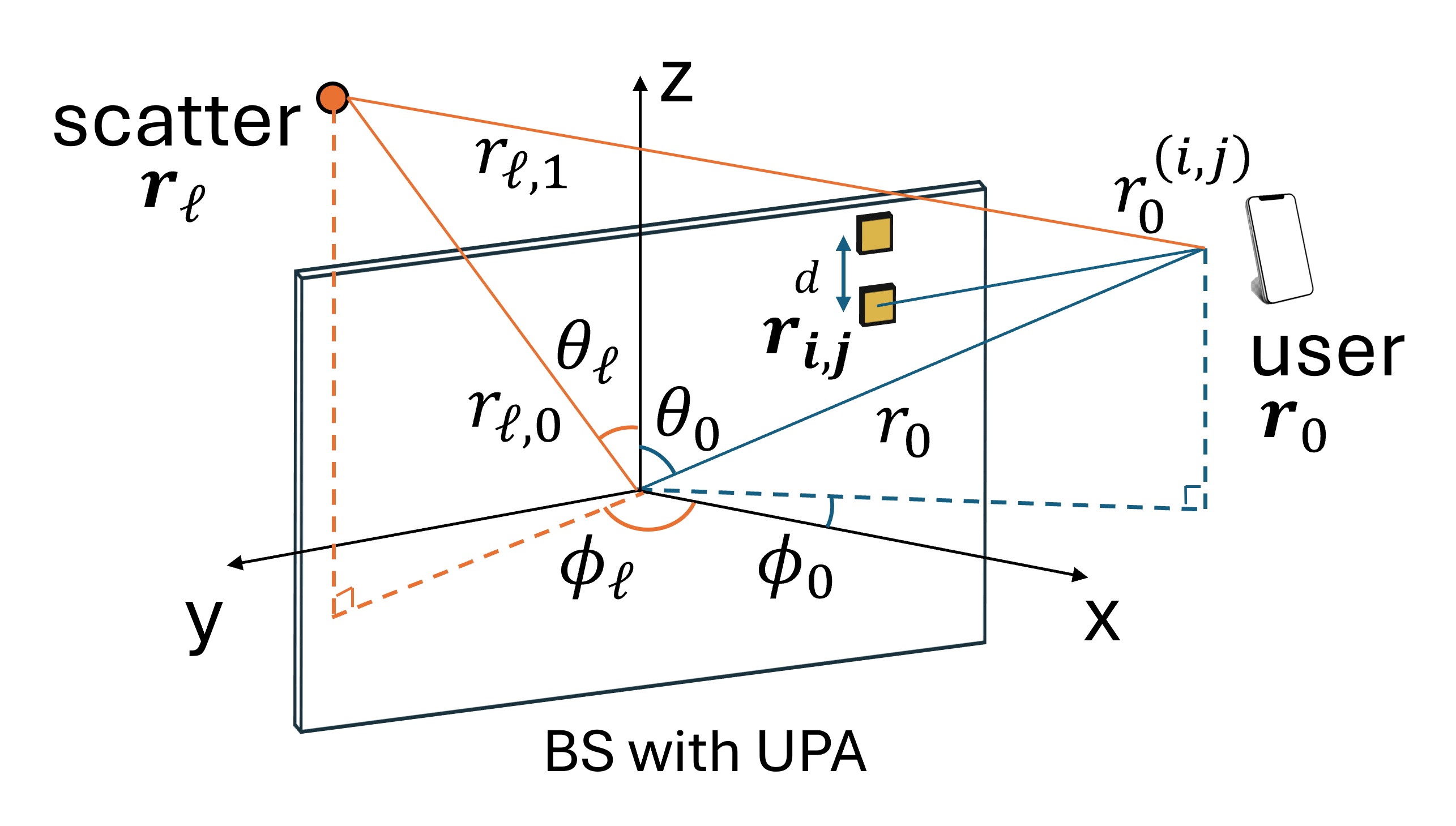}
\caption{Diagram of the near-field communication system featuring a UPA.}
\label{fig:antenna}
\end{figure}

Let
$
\mathcal{N}_y \triangleq \{0,1,\dots,N_y-1\},\;
\mathcal{N}_z \triangleq \{0,1,\dots,N_z-1\},
$
and define the centered index offsets
$
\delta_i \triangleq \frac{2i-N_y+1}{2},\;
\delta_j \triangleq \frac{2j-N_z+1}{2},
\; i\in\mathcal{N}_y,\; j\in\mathcal{N}_z.
$
The $(i,j)$-th antenna element is located at
$
\mathbf{r}_{i,j} = (0,\,\delta_i d,\,\delta_j d)^{\mathsf{T}},
$
where $d=\lambda/2$ is the inter-element spacing and $N\triangleq N_yN_z$ is the total number of antennas.

A point in space (user or scatterer) is parameterized by spherical coordinates:
range $r>0$, elevation $\theta\in[0,\pi]$, and azimuth $\phi\in[-\pi,\pi]$.
We use the unit direction vector
\begin{equation}
\mathbf{q}(\theta,\phi) \triangleq
\begin{bmatrix}
\sin\theta\cos\phi\\[0.5mm]
\sin\theta\sin\phi\\[0.5mm]
\cos\theta
\end{bmatrix},
\end{equation}
so that the corresponding Cartesian position is $r\,\mathbf{q}(\theta,\phi)$.
The distance from this point to the $(i,j)$-th array element is
$
r^{(i,j)}(\theta,\phi,r)
= \big\|r\,\mathbf{q}(\theta,\phi)-\mathbf{r}_{i,j}\big\|_2.
$
Accordingly, the near-field steering response associated with this path is
\begin{equation}
\big[b(\theta,\phi,r)\big]_{i,j}
= \frac{1}{\sqrt{N}}
\exp\!\left(
 -j\frac{2\pi}{\lambda}
 \big(r^{(i,j)}(\theta,\phi,r)-r\big)
\right),
\label{eq:upa_steering_entry_compact}
\end{equation}
and stacking all $(i,j)$ entries in a fixed order gives the steering vector
$\mathbf{b}(\theta,\phi,r)\in\mathbb{C}^{N}$.

We place the user at $(\theta_0,\phi_0,r_0)$, where $r_0$ is the BS-user distance.
The line-of-sight (LoS) component is modeled as
\begin{equation}
\mathbf{h}_0
= \sqrt{N}\,g_0
  e^{-j\frac{2\pi r_0}{\lambda}}
  \mathbf{b}(\theta_0,\phi_0,r_0),
\label{eq:upa_channel_los_r0}
\end{equation}
with the large-scale gain $g_0 = \frac{\lambda}{4\pi r_0}$.

In addition to LoS, we consider $(L-1)$ single-bounce scatterers indexed by $\ell=1,\dots,L-1$.
The $\ell$-th scatterer is located at $(\theta_\ell,\phi_\ell,r_{\ell,0})$ relative to the BS, with position
$
\mathbf{r}_\ell \triangleq r_{\ell,0}\,\mathbf{q}(\theta_\ell,\phi_\ell).
$
The BS-scatter distance is $r_{\ell,0}=\|\mathbf{r}_\ell\|_2$, and the scatter-user distance is
$
r_{\ell,1}
= \big\|r_0\,\mathbf{q}(\theta_0,\phi_0)
      - r_{\ell,0}\,\mathbf{q}(\theta_\ell,\phi_\ell)\big\|_2.
$
The resulting NLoS contribution is
\begin{equation}
\mathbf{h}_\ell
= \sqrt{N}\,g_\ell
  e^{-j\frac{2\pi (r_{\ell,0}+r_{\ell,1})}{\lambda}}
  \mathbf{b}(\theta_\ell,\phi_\ell,r_{\ell,0}),
\label{eq:upa_channel_each_nlos_r_l0_l1}
\end{equation}
where
$g_\ell
= \frac{\lambda}{4\pi r_{\ell,0} r_{\ell,1}}\,p_\ell$ and $p_\ell \sim \mathcal{CN}(0,1)$
capture the two-hop path loss and the random complex reflection coefficient, respectively.
Collecting all paths, the downlink channel is
\begin{equation}
\mathbf{h}
= \mathbf{h}_0 + \sum_{\ell=1}^{L-1}\mathbf{h}_\ell
\in\mathbb{C}^{N}.
\label{eq:upa_channel_total_compact}
\end{equation}

In this work, our focus is the  near-field regime, in which the user range falls between the Fresnel
distance $r_{\mathrm{F}}$ and the Rayleigh distance $r_{\mathrm{R}}$ \cite{wuMultipleAccessNearField2023}.  Here $r_{\mathrm{F}}\triangleq 0.62\sqrt{D^3/\lambda}$ and
$r_{\mathrm{R}}\triangleq 2D^2/\lambda$, where $D$ is the effective aperture size of the UPA.

\section{Beam Pattern and Sparsity on UPA}
\label{sec:upa_beam_pattern}

\noindent\hspace*{1em}
This section characterizes how a near-field path appears in the 2D DFT beamspace of a UPA.
In far-field beamspace, the energy of a single path is typically concentrated around one DFT index.
In the near field, however, the Fresnel (quadratic) phase induces energy spreading across neighboring DFT indices.
More importantly for our design, this spreading exhibits strong axis-aligned locality: due to the separable structure of the Fresnel phase on a UPA, the beamspace power map becomes highly correlated along the two DFT axes, forming a cross-shaped local correlation pattern around the dominant beam.
To make this behavior explicit, we first introduce a separable Fresnel approximation under which the UPA response factorizes into two 1D patterns, and then use the 6-dB lobe-width law for near-field ULAs in \cite{wang2025lowcomplexitynearfieldbeamtraining} to quantify the per-axis spreading widths and the resulting effective sparsity.

We parameterize the 2D DFT beamspace using
$
u \triangleq \sin\theta\,\sin\phi,
\;
v \triangleq \cos\theta.
$
With this choice, the orthogonal 2D DFT sampling corresponds to uniform grids over $(u,v)$:
$
u_{n}=\frac{2n-N_y+1}{N_y},
\;
v_{m}=\frac{2m-N_z+1}{N_z},
$
for $n\in\{0,\dots,N_y-1\}$ and $m\in\{0,\dots,N_z-1\}$.

We use the standard DFT codebook as the beamspace basis on the UPA.
Along the $y$-axis, for any spatial cosine $u\in[-1,1]$, define the unit-norm 1D DFT vector
{\small\begin{equation}
\mathbf{a}_y(u)\triangleq
\frac{1}{\sqrt{N_y}}
\big[
e^{-j\frac{2\pi}{\lambda}du\delta_0},
e^{-j\frac{2\pi}{\lambda}du\delta_1},
\dots,
e^{-j\frac{2\pi}{\lambda}du\delta_{N_y-1}}
\big]^{\mathsf T},
\label{eq:1d_dft_ay}
\end{equation}}
and similarly along the $z$-axis
{\small\begin{equation}
\mathbf{a}_z(v)\triangleq
\frac{1}{\sqrt{N_z}}
\big[
e^{-j\frac{2\pi}{\lambda}dv\delta_0},
e^{-j\frac{2\pi}{\lambda}dv\delta_1},
\dots,
e^{-j\frac{2\pi}{\lambda}dv\delta_{N_z-1}}
\big]^{\mathsf T},
\label{eq:1d_dft_az}
\end{equation}}
A 2D DFT codeword pointing to $(u_n,v_m)$ is then
$
\mathbf{a}(u_n,v_m)
\triangleq
\mathbf{a}_y(u_n)\otimes \mathbf{a}_z(v_m)\in\mathbb{C}^{N_yN_z}.
$

\subsection{Separable Fresnel approximation}
\noindent\hspace*{1em}
Let $r^{(i,j)}$ be the distance from element $(i,j)$ to a path at $(\theta,\phi,r)$.
A second-order (Fresnel) expansion around the array center gives
{\small
\begin{align}
r^{(i,j)}-r
\approx
&-d\big(u\,\delta_i+v\,\delta_j\big)
+\frac{d^2}{2r}\Big((1-u^2)\delta_i^2+(1-v^2)\delta_j^2\Big)\nonumber\\
&-\frac{d^2}{r}\,u v\,\delta_i\delta_j.
\label{eq:fresnel_upa_full}
\end{align}}
Following standard large-array UPA analyses \cite{wuMultipleAccessNearField2023},
the mixed quadratic term $-\tfrac{d^2}{r}uv\,\delta_i\delta_j$ is typically much smaller than the
axis-wise quadratic terms over the operating near-field region; we therefore neglect it.
This yields an approximately separable near-field phase across the two axes, so the steering vector admits
\begin{equation}
\mathbf{b}(u,v,r) ~\approx~ \mathbf{b}_y(u,r)\otimes \mathbf{b}_z(v,r),
\label{eq:upa_separable_sv}
\end{equation}
where
{\footnotesize
\begin{align}
\big[\mathbf{b}_y(u,r)\big]_i
&=\frac{1}{\sqrt{N_y}}
\exp\!\Big\{-j\frac{2\pi}{\lambda}\Big(-d\,u\,\delta_i+\frac{d^2}{2r}(1-u^2)\delta_i^2\Big)\Big\},
\\
\big[\mathbf{b}_z(v,r)\big]_j
&=\frac{1}{\sqrt{N_z}}
\exp\!\Big\{-j\frac{2\pi}{\lambda}\Big(-d\,v\,\delta_j+\frac{d^2}{2r}(1-v^2)\delta_j^2\Big)\Big\}.
\label{eq:upa_separable_sv_uv}
\end{align}}

\subsection{2D beam-pattern factorization and plateau support}
\noindent\hspace*{1em}
Consider a path at $(u_0,v_0,r_0)$. Under \Cref{eq:upa_separable_sv}, the inner product between the near-field steering vector
and a 2D DFT codeword factorizes as
\begin{align}
&G(u_0,v_0,r_0;u_n,v_m)\nonumber\\
&\triangleq \big|\mathbf{b}(u_0,v_0,r_0)^{H}\mathbf{a}(u_n,v_m)\big|\nonumber\\
&\approx
\big|\mathbf{b}_y(u_0,r_0)^{H}\mathbf{a}_y(u_n)\big|\;
\big|\mathbf{b}_z(v_0,r_0)^{H}\mathbf{a}_z(v_m)\big|.
\label{eq:2d_factorization}
\end{align}
Hence, the 2D magnitude response can be understood as the product of two 1D responses, one per axis.

To quantify the 1D spreading, we normalize the axis-wise responses. For the $z$-axis, define
\begin{align}
G_{z,\mathrm{norm}}(v_0,r_0;v)
&\triangleq
\frac{\big|\mathbf{b}_z(v_0,r_0)^{H}\mathbf{a}_z(v)\big|}
{\big|\mathbf{b}_z(v_0,r_0)^{H}\mathbf{a}_z(v_0)\big|}.
\label{eq:axis_norm_resp}
\end{align}
The corresponding 6-dB lobe set is
\begin{align}
\Phi^{z}_{6\mathrm{dB}}(v_0,r_0)
&\triangleq
\bigl\{v\mid\,G_{z,\mathrm{norm}}(v_0,r_0;v)\ge 1/2\bigr\},
\end{align}
and its width is
\begin{align}
B_z(v_0,r_0)\triangleq \max(\Phi^{z}_{6\mathrm{dB}})-\min(\Phi^{z}_{6\mathrm{dB}}).
\label{eq:axis_width_def}
\end{align}
The same definitions apply to the $y$-axis.

For near-field ULAs, \cite{wang2025lowcomplexitynearfieldbeamtraining} shows that the 6-dB lobe width for a path at $(\xi_0,r_0)$ admits
\begin{equation}
B(\xi_0,r_0)=Nd\,\frac{1-\xi_0^2}{r_0}.
\label{eq:beamwidth_ula}
\end{equation}
In our UPA model, \Cref{eq:upa_separable_sv_uv} reveals that $\mathbf{b}_y(u_0,r_0)$ and $\mathbf{b}_z(v_0,r_0)$ share the same
quadratic-phase structure as the ULA steering vector in \cite{wang2025lowcomplexitynearfieldbeamtraining}, but with apertures $N_y d$ and $N_z d$.
Applying the same 6-dB width law along each axis yields
\begin{equation}
B_y(u_0,r_0)=N_y d\,\frac{1-u_0^2}{r_0},\;
B_z(v_0,r_0)=N_z d\,\frac{1-v_0^2}{r_0}.
\label{eq:beamwidth_upa_uv}
\end{equation}
These widths quantify the extent of near-field energy spreading around the peak in the continuous $(u,v)$ domain.
Combined with the factorization in \Cref{eq:2d_factorization}, a convenient support approximation is to take the Cartesian product
of the two axis-wise 6-dB sets, which aligns with the connected plateau with sharp edges observed in 2D DFT beamspace
\cite{wang2025lowcomplexitynearfieldbeamtraining,Cui_and_Dai_channel_model}.

\begin{figure}[!t]
\centering
\includegraphics[width=0.8\linewidth]{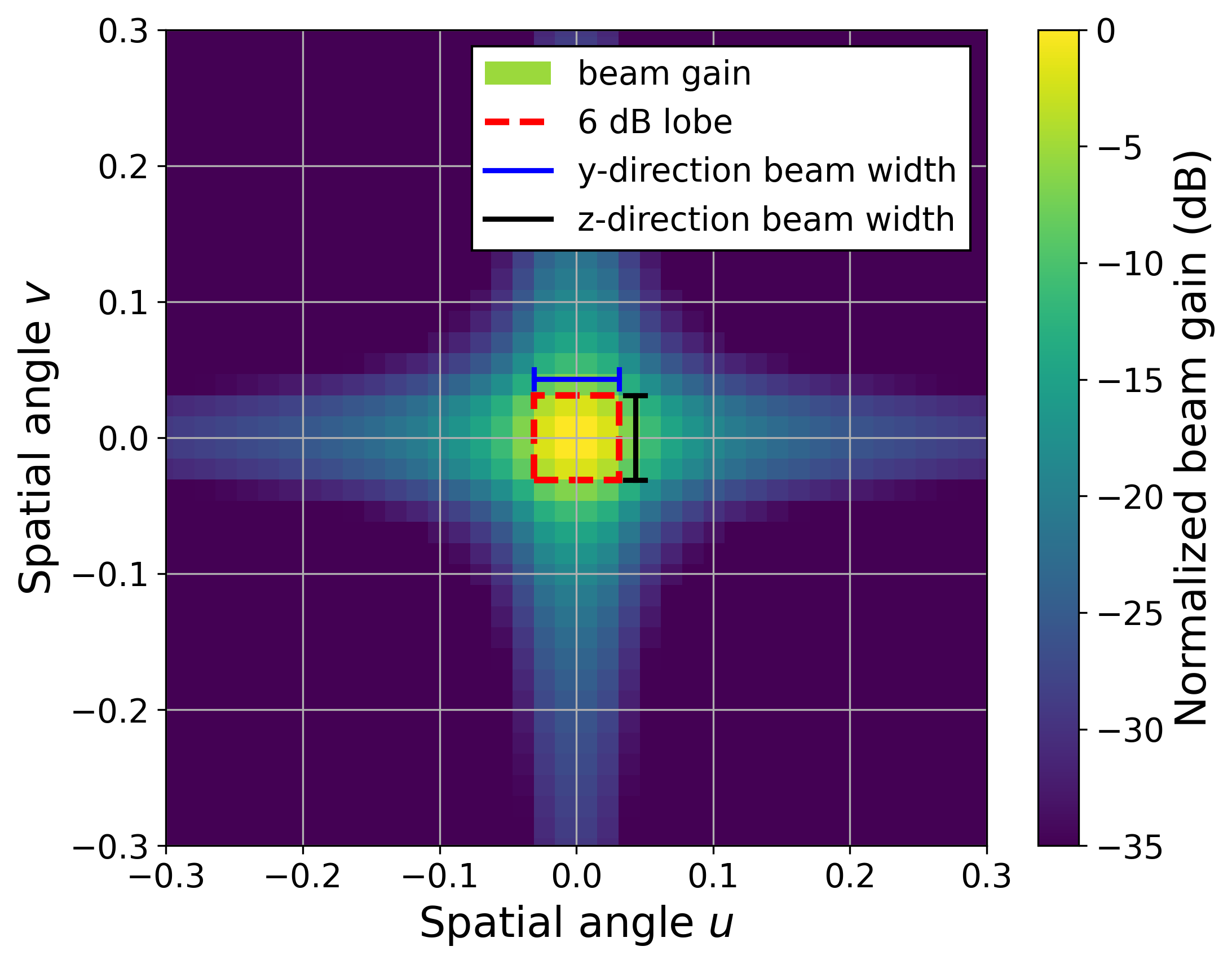}
\caption{Normalized DFT-beamspace magnitude on a $(128\times 128)$ UPA at
$f_c=28~\mathrm{GHz}$.}
\label{fig:upa_mainlobe}
\end{figure}

\begin{figure}[t]
    \centering
    \subfloat[\tiny(c)][\textrm{\footnotesize u-cut (y-direction beam width)}]{
        \includegraphics[width=0.46\linewidth]{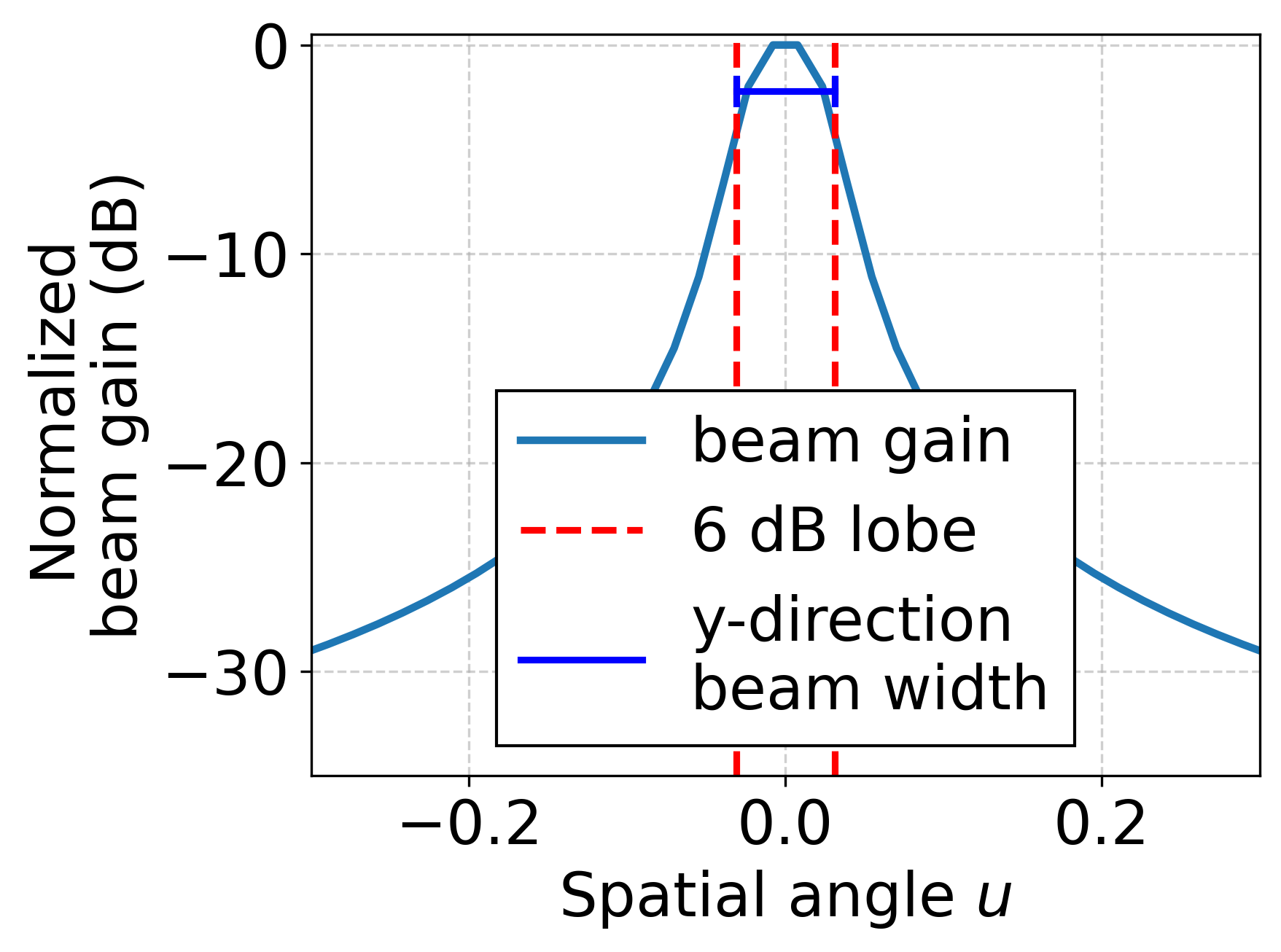}%
        \label{fig:upa_cut_u}
    }
    \hfill
    \subfloat[\tiny(c)][\textrm{\footnotesize v-cut (z-direction beam width)}]{
        \includegraphics[width=0.46\linewidth]{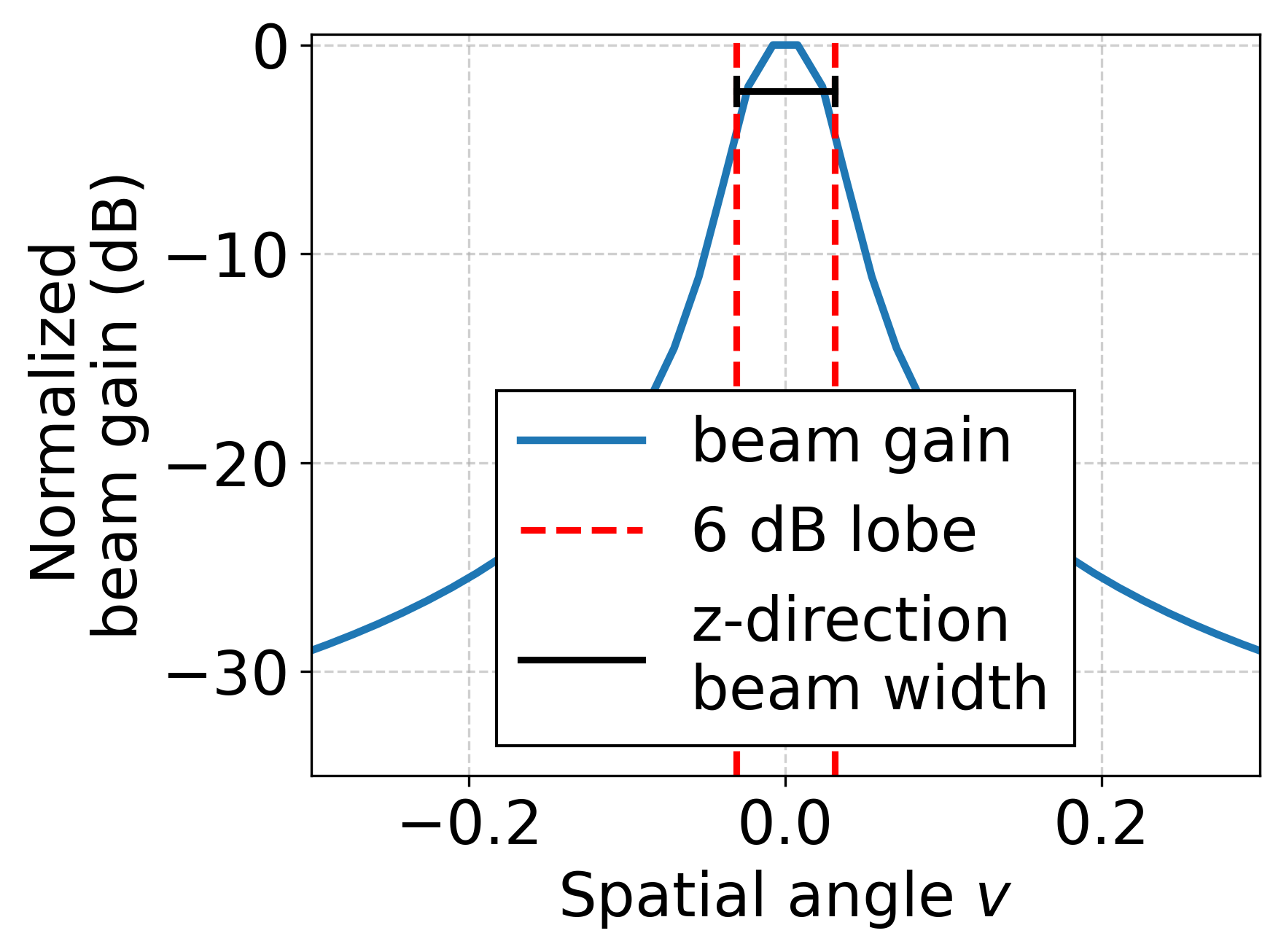}%
        \label{fig:upa_cut_v}
    }
    \caption{
        1D cuts of the normalized gain and the 6-dB lobe widths.
    }
    \label{fig:upa_main_lobe2}
\end{figure}
An example of the UPA beam pattern is shown in \Cref{fig:upa_mainlobe}.
The red dashed rectangle marks a 6-dB support region, which serves as a convenient approximation for the effective active neighborhood.
\Cref{fig:upa_main_lobe2} further reports the 1D $u$- and $v$-cuts through the peak of \Cref{fig:upa_mainlobe}; the 6-dB thresholds delineate the per-axis lobe widths $B_y$ and $B_z$.
Together with the factorization in \Cref{eq:2d_factorization}, these cuts highlight strong axis-aligned locality, i.e., nearby DFT indices along the $u$- and $v$-directions tend to have similar magnitudes, giving rise to the cross-shaped correlation pattern.

\subsection{Effective sparsity level in 2D DFT beamspace}
\noindent\hspace*{1em}Let $\Delta_y$ and $\Delta_z$ denote the DFT grid spacings along the
$u$- and $v$-axes, respectively. Approximating each path's 2D
lobe support by a rectangle with side lengths $B_y$ and $B_z$, the
number of active DFT coefficients per path is approximately
$B_yB_z/(\Delta_y\Delta_z)$. For $L$ paths with non-overlapping lobe
supports (the worst case), the expected sparsity level $K_{\mathrm{UPA}}$ (i.e., the expected number of active
DFT coefficients) scales as
\begin{equation}
\mathbb{E}[K_{\mathrm{UPA}}]
~\approx~
\frac{LN_yN_zd^2}{\Delta_y\Delta_z}\,
\mathbb{E}\!\left[\frac{(1-u^2)(1-v^2)}{r^2}\right].
\label{eq:sparsity_upa_general}
\end{equation}
Note that $u$ and $v$ are coupled through
$u=\sin\theta\sin\phi=\sqrt{1-v^2}\,s$ with $s\triangleq\sin\phi$. We assume the following independent uniform distributions:
\begin{equation}
v=\cos\theta \sim \mathcal{U}[v_1,v_2],\;
s=\sin\phi \sim \mathcal{U}[s_1,s_2],\;
r\sim\mathcal{U}[r_1,r_2],
\label{eq:dist_assump_us_r}
\end{equation}
and $v,s,r$ are independent to each other. Then the expectation in \Cref{eq:sparsity_upa_general} admits a closed form, and the sparsity
approximation becomes
\begin{equation}
\mathbb{E}[K_{\mathrm{UPA}}]
~\approx~
\frac{LN_yN_zd^2}{\Delta_y\Delta_z}\,
\Big(\Xi_{v,s}\Big)\,
\Big(\Xi_{r}\Big),
\label{eq:sparsity_upa_closed}
\end{equation}
where
\begin{align}
\Xi_{r}
=&\frac{1}{r_2-r_1}\left(\frac{1}{r_1}-\frac{1}{r_2}\right),
\label{eq:Xi_r}
\\
\Xi_{v,s}
=&\Big(1-\mu_2\Big)
-\nu_2\Big(1-2\mu_2+\mu_4\Big),
\label{eq:Xi_us}
\end{align}
with the moments
\begin{align}
&\mu_2=\frac{v_2^3-v_1^3}{3(v_2-v_1)},\;
\mu_4=\frac{v_2^5-v_1^5}{5(v_2-v_1)},\nonumber\\
&\nu_2=\frac{s_2^3-s_1^3}{3(s_2-s_1)}.
\label{eq:uv_moments}
\end{align}
A detailed derivation of \Cref{eq:sparsity_upa_closed} is provided
in \Cref{app:sparsity_upa_derivation}. For a $N_y\times N_z=128\times 16$ UPA at $f_c=28$~GHz with $d=\lambda/2$
and the standard 2D DFT grid $\Delta_y=2/N_y$ and $\Delta_z=2/N_z$, consider
$
v\sim \mathcal{U}\!\left[-\tfrac{1}{2},\tfrac{1}{2}\right],\;
s\sim \mathcal{U}\!\left[-\tfrac{1}{2},\tfrac{1}{2}\right],\;
r\sim \mathcal{U}\!\left[r_{\mathrm{F}},\frac{r_{\mathrm{R}}}{20}\right].
$
In this case, for an $L=6$-path channel, \Cref{eq:sparsity_upa_general} yields
$
\mathbb{E}[K_{\mathrm{UPA}}]\approx 11,
$
confirming that the near-field channel remains strongly sparse in the 2D DFT beamspace under this setting.
Moreover, the preceding analysis implies pronounced cross-shaped local correlation of DFT-beam magnitudes around only a few dominant components.
In the sequel, we exploit this structure to design a cross-pattern kernel for Stage~I search and to justify sparse recovery over the standard 2D DFT codebook for Stage~II PR.

\section{Proposed Gaussian Bandit-Assisted Sparse Phase Retrieval}
\label{sec:proposed}
This section presents a two-stage amplitude-only near-field beam training framework for UPAs, which operates in the DFT beamspace.
Stage~I performs adaptive support discovery by sequentially probing a small subset of 2D DFT beams and using a GP bandit to learn the beamspace power over the DFT grid. The key idea is to exploit the structured near-field DFT beam pattern characterized in \Cref{sec:upa_beam_pattern}: due to near-field energy splitting, neighboring DFT indices exhibit locally correlated magnitudes within a compact lobe region, which motivates a smooth GP prior and enables an efficient adaptive search.
Stage~II then refines the channel estimate by performing sparse phase retrieval over the discovered support set in Stage~I using a Gaussian-masked DFT sensing matrix. By restricting the search space, Stage~II operates on a much lower ambient dimension and thus requires fewer amplitude measurements than a full-grid recovery, while still being able to further prune and refine the support locally.


\subsection{Stage I: GP-LSE Support Discovery in 2D DFT Beamspace}
\label{sec:stage1_gpucb}

Let $\mathcal{I}\triangleq\{(n,m):n\in\{0,\dots,N_y-1\},\,m\in\{0,\dots,N_z-1\}\}$ denote the 2D DFT beam index set. 
For each $i=(n,m)\in\mathcal{I}$, let $\mathbf{f}_i\in\mathbb{C}^{N}$ denote the corresponding unit-norm 2D DFT codeword on the UPA (i.e., the beam pointing to the DFT grid point $(u_n,v_m)$ introduced in Section~\ref{sec:upa_beam_pattern}). 
{ During Stage~I, the BS restricts its probing beams to this DFT codebook. Specifically, at training slot $t$, the BS transmits a known pilot $x_t$ (with $|x_t|=1$) using the $i_t$-th DFT beamforming vector} $\mathbf{v}_{i_t} = \mathbf{f}_{i_t}, \; i_t=(n_t,m_t)\in\mathcal{I}.$
The received signal at the user is
\begin{equation}
y_t = \mathbf{h}^{H}\mathbf{v}_{i_t} x_t + w_t,\;
w_t\sim\mathcal{CN}(0,\sigma^2).
\label{eq:stage1_rx}
\end{equation}
Due to phase impairments, we use the power feedback $z_t^{\mathrm{raw}}=|y_t|^2$ and define the beamspace power map
$f(i)\triangleq|\mathbf{h}^H\mathbf{f}_i|^2$. Since $\mathbb{E}[z_t^{\mathrm{raw}}\,|\,\mathbf{h}]=f(i_t)+\sigma^2$,
we subtract the noise bias and obtain
\begin{equation}\label{eq:stage1_power_denoised}
\tilde z_t \triangleq z_t^{\mathrm{raw}}-\sigma^2
= f(i_t)+\tilde\varepsilon_t,
\end{equation}
where $\tilde\varepsilon_t$ is conditionally zero-mean and captures the fluctuation of the noncentral
Rician power around its mean. For tractable GP regression, we compute $(\mu_t(\cdot),\sigma_t(\cdot))$ using the standard
Gaussian-likelihood GP update with regularization parameter $\sigma_\varepsilon^2$ in \Cref{eq:Ktilde}.
Notably, the debiased power noise $\tilde\varepsilon_t$ in \Cref{eq:stage1_power_denoised} is generally
non-Gaussian. To enable high-probability confidence statements,
we construct in Appendix~\ref{app:stat_stage1} a bounded-noise event $\mathcal{E}_{\mathrm{bd}}(T_1)$ such that,
with probability at least $1-\delta_{\mathrm{bd}}$, the feedback noise forms a uniformly bounded martingale
difference sequence over the Stage~I horizon:
$|\tilde\varepsilon_t|\le B_t\le B_{T_1}$ for all $t\le T_1$. This bounded-noise model is then used in
Appendix~\ref{app:lse_overhead} to derive the GP-LSE guarantee via Theorem~6 of \cite{srinivas2010gpucb}.

We place a GP prior on the power map $f(\cdot)$:
\begin{equation}
f(\cdot)\sim\mathcal{GP}\!\big(0,\,k(\cdot,\cdot)\big).
\label{eq:gp_prior}
\end{equation}

Motivated by the empirical near-field DFT beam pattern in \Cref{fig:upa_mainlobe}, the beamspace power map inherits a compact plateau around the peak and a cross-shaped correlation along the two DFT axes. In particular, if a given DFT beam exhibits a large received power, then its neighboring beams along the $u$- and $v$-directions are also likely to be strong due to near-field energy spreading. To encode this physics-driven coupling, so that each probed beam can inform a cross-shaped neighborhood of unprobed beams, we adopt a \emph{cross-pattern} kernel defined on the 2D DFT coordinates.

Associate each index $i=(n,m)$.
We define the 1D Laplacian kernels
{\small\begin{equation}
k_u(i,i')=\exp\!\left(-\frac{|u_n-u_{n'}|}{\ell_u}\right),
k_v(i,i')=\exp\!\left(-\frac{|v_m-v_{m'}|}{\ell_v}\right),
\label{eq:laplace_1d_kernel}
\end{equation}}
and construct a \emph{cross-pattern} combination
\begin{equation}
k(i,i')
=
\alpha\,\frac{k_u(i,i')+k_v(i,i')}{2}
+(1-\alpha)\,k_u(i,i')\,k_v(i,i'),
\label{eq:plus_laplace_kernel}
\end{equation}
where $\alpha\in[0,1]$ trades off axis-aligned (cross-shaped) correlation vs.\ joint 2D correlation, and $\ell_u,\ell_v$ control the correlation span along the $y$- and $z$-axes, respectively. 
To reflect the near-field beamwidth law \Cref{eq:beamwidth_upa_uv}, a physics-guided choice links $(\ell_u,\ell_v)$ to typical lobe widths,
\begin{align}
   &\ell_u=\kappa_u\,\bar B_y,\; 
\ell_v=\kappa_v\,\bar B_z,\\
&\bar B_y\triangleq N_y\,d\;\mathbb{E}\!\left[\frac{1-u^2}{r}\right],\;
\bar B_z\triangleq N_z\,d\;\mathbb{E}\!\left[\frac{1-v^2}{r}\right],
\label{eq:ell_choice} 
\end{align}
where the expectations are taken under the angle and distance priors and $(\kappa_u,\kappa_v)$ are tunable constants. If such priors are not available, one can directly tune $\ell_u$ and $\ell_v$.

Let $\mathcal{D}_t=\{(i_s,\tilde z_s)\}_{s=1}^{t}$ denote the denoised observations up to slot $t$, and collect $\tilde{\mathbf{z}}_t=[\tilde z_1,\dots,\tilde z_t]^{\mathsf T}$. 
Define the kernel matrix $\mathbf{K}_t\in\mathbb{R}^{t\times t}$ with $[\mathbf{K}_t]_{a,b}=k(i_a,i_b)$, and
\begin{equation}
\widetilde{\mathbf{K}}_t \triangleq \mathbf{K}_t + \sigma_\varepsilon^2\mathbf{I}_t.
\label{eq:Ktilde}
\end{equation}
Here $\sigma_\varepsilon^2$ is an effective regularization parameter used by the GP posterior update:
it absorbs model mismatch and controls how strongly the posterior interpolates the observed debiased power
feedback. In the analysis, a conservative choice is $\sigma_\varepsilon^2 \gtrsim B_{T_1}^2$ on the bounded-noise
event in Appendix~\ref{app:stat_stage1}.

For any candidate beam $i\in\mathcal{I}$, define the kernel vector
\begin{equation}
\mathbf{k}_t(i)\triangleq[k(i,i_1),\dots,k(i,i_t)]^{\mathsf T}.
\end{equation}
Then the GP posterior on $f(i)$ is Gaussian with the following mean value and variance
\begin{align}
\mu_t(i) &= \mathbf{k}_t(i)^{\mathsf T}\widetilde{\mathbf{K}}_t^{-1}\tilde{\mathbf{z}}_t,
\label{eq:gp_post_mean}\\
\sigma_t^2(i) &= k(i,i) - \mathbf{k}_t(i)^{\mathsf T}\widetilde{\mathbf{K}}_t^{-1}\mathbf{k}_t(i).
\label{eq:gp_post_var}
\end{align}
After each new probe $(i_t,\tilde z_t)$, we augment the dataset $\mathcal{D}_t$ and update the GP posterior by recomputing the mean and variance according to \Cref{eq:gp_post_mean}-\Cref{eq:gp_post_var}. Note that the GP prior is defined on the finite DFT index set $\mathcal{I}$, and the time index $t$ merely labels the sequential observations. Multiple probes of the same beam index $i\in\mathcal{I}$ therefore correspond to multiple noisy samples of the same latent value $f(i)$, which are naturally handled by the GP posterior update.

Based on the GP posterior in \Cref{eq:gp_post_mean}-\Cref{eq:gp_post_var}, we adopt the
Gaussian-process level-set estimation (GP-LSE) principle to guide adaptive DFT probing
\cite{gotovos2013lse}. The goal of Stage~I is to efficiently identify a candidate set of
promising DFT beams with significant received powers (i.e., support) based on a limited probing budget $T_1$.

At round $t{+}1$, define the GP confidence interval for each $i\in\mathcal{I}$ as
\begin{equation}
Q_{t+1}(i)\triangleq
\Big[\mu_t(i)-\sqrt{\beta_{t+1}}\,\sigma_t(i),\;
     \mu_t(i)+\sqrt{\beta_{t+1}}\,\sigma_t(i)\Big],
\label{eq:lse_Q}
\end{equation}
where $\beta_t$ is a nondecreasing exploration schedule.
For our theoretical guarantee under the non-Gaussian debiased power noise, Appendix~\ref{app:lse_overhead} invokes the bounded-noise RKHS confidence schedule from Theorem~6 of \cite{srinivas2010gpucb}:
\begin{equation}
\beta_t
~=~
2B_f^2 + 300\,\gamma_{t-1}\log^3\!\Big(\tfrac{t}{\delta}\Big),
\; t\ge 1,
\label{eq:lse_beta_thm6_main}
\end{equation}
where $\|f\|_{\mathcal H_k}\le B_f$ (see \Cref{app:lse_overhead}) and $\gamma_t$ is the maximum information gain induced by $k$. 
The quantity $\gamma_t$ is the maximum information gain of the kernel $k$ over the finite index set $\mathcal I$, i.e., the largest mutual information between $t$ noisy samples and the latent function values that can be achieved by the best choice of $t$ query locations:
\begin{equation}\label{eq:app_gamma_def}
    \gamma_t \triangleq \max_{A\subseteq\mathcal I,\,|A|=t}\;
\frac{1}{2}\log\det\!\big(\mathbf I_t+\sigma_\varepsilon^{-2}\mathbf K_A\big),
\end{equation}
where $\mathbf K_A$ is the Gram matrix on $A$ with entries $k(\cdot,\cdot)$.
Intuitively, $\gamma_t$ quantifies the \emph{intrinsic statistical complexity} of learning $f$ under the similarity notion induced by $k$:
faster spectral decay yields smaller $\gamma_t$, hence fewer probes are needed to shrink GP confidence intervals. Some examples can be found in \Cref{tab:gamma_examples} \cite{srinivas2010gpucb}. 
Since each 1D Laplace factor $k_u$ (and $k_v$) is the Mat\'ern kernel with smoothness $\nu=\tfrac12$,
the product term in \Cref{eq:plus_laplace_kernel} is a separable 2D Mat\'ern (also $\nu=\tfrac12$) kernel,
and the overall plus-Laplace kernel is its convex combination with an additive term.
Thus, combining \cite[Theorem~4]{srinivas2010gpucb} and the Mat\'ern spectral-decay bound in
\cite[Theorem~5]{srinivas2010gpucb}, the worst-case $\gamma_t$ for our kernel follows by specializing the Mat\'ern
bound at $\nu=\tfrac12$.
Here $d$ denotes the dimension of the kernel input space and, in our Stage~I GP over the 2D DFT grid, $d=2$.
\begin{table}[t]
\centering
\caption{Typical growth of $\gamma_t$ for common kernels on a compact domain.}
\label{tab:gamma_examples}
\setlength{\tabcolsep}{15pt}
\renewcommand{\arraystretch}{1.1}
\begin{tabular}{l c}
\hline
Kernel family & Order of $\gamma_t$  \\
\hline
Linear  & $\mathcal O\!\big(d\log t\big)$ \\
Squared Exponential  & $\mathcal O\!\big((\log t)^{d+1}\big)$ \\
Mat\'ern & $\mathcal O\!\Big(t^{\frac{d(d+1)}{2\nu+d(d+1)}}\log t\Big)$ \\
\hline
\end{tabular}
\vspace{-0.5em}
\end{table}
While \Cref{eq:lse_beta_thm6_main} involves quantities that are not
available in closed form in practice, it is standard in kernelized bandit implementations to treat these terms as
tunable conservatism parameters that control the exploration-exploitation tradeoff.

Following \cite{gotovos2013lse}, we maintain a monotonically shrinking confidence region by
intersecting intervals over time:
\begin{align}
   \ell_{t+1}(i)\triangleq \max\{\ell_t(i),\ \mu_t(i)-\sqrt{\beta_{t+1}}\,\sigma_t(i)\},\nonumber\\
u_{t+1}(i)\triangleq \min\{u_t(i),\ \mu_t(i)+\sqrt{\beta_{t+1}}\,\sigma_t(i)\},
\label{eq:lse_shrink} 
\end{align}
initialized with $\ell_0(i)=-\infty$ and $u_0(i)=+\infty$. These bounds summarize all information
collected up to time $t$ and are tighter than the raw GP interval \Cref{eq:lse_Q}.

Specifically, fix a power threshold $\tau>0$ and a tolerance $\epsilon>0$. We then perform level-set classification with tolerance $\epsilon$:
\begin{align}
\mathcal{H}_{t+1}
&\triangleq
\mathcal{H}_t \cup \big\{i\in\mathcal{I}:\ \ell_{t+1}(i)>\tau-\epsilon\big\},\label{eq:lse_H}\\
\mathcal{L}_{t+1}
&\triangleq
\mathcal{L}_t \cup \big\{i\in\mathcal{I}:\ u_{t+1}(i)\le\tau+\epsilon\big\},\label{eq:lse_L}\\
\mathcal{U}_{t+1}
&\triangleq
\mathcal{I}\setminus(\mathcal{H}_{t+1}\cup\mathcal{L}_{t+1}),\label{eq:lse_U}
\end{align}
where $\mathcal{H}_t$ collects indices predicted to be above the threshold (candidate strong beams),
$\mathcal{L}_t$ collects indices predicted to be below, and $\mathcal{U}_t$ are the remaining undecided
indices that concentrate around the boundary.

To decide the next probing beam, GP-LSE selects the most ambiguous undecided index
\cite{gotovos2013lse}. Define the ambiguity score
\begin{equation}
a_{t+1}(i)\triangleq \min\big\{u_{t+1}(i)-\tau,\ \tau-\ell_{t+1}(i)\big\},\; i\in\mathcal{U}_{t+1},
\label{eq:lse_ambiguity}
\end{equation}
and choose
\begin{equation}
i_{t+1}\in \arg\max_{i\in\mathcal{U}_{t+1}} a_{t+1}(i).
\label{eq:lse_select}
\end{equation}
After probing $i_{t+1}$, we obtain $\tilde z_{t+1}$ via
\Cref{eq:stage1_power_denoised}, update the GP posterior \Cref{eq:gp_post_mean}-\Cref{eq:gp_post_var},
and repeat until we reach the Stage~I budget $T_1$ or when
$\max_{i\in\mathcal{U}_{t}} a_{t}(i)\le \epsilon$.

\subsubsection{Stage~I outputs support for Stage~II.} Suppose we have $T_1$ adaptive probes in total, the GP-LSE procedure then produces a confidence-aware partition of the DFT
index set into the predicted superlevel set $\mathcal{H}_{T_1}$, the predicted sublevel set
$\mathcal{L}_{T_1}$, and the remaining undecided set $\mathcal{U}_{T_1}$ in
\Cref{eq:lse_H}-\Cref{eq:lse_U}. In the ideal LSE setting, the natural output is the entire
superlevel set
\begin{equation}
\widehat{\mathcal{S}}_{\mathrm{LSE}}\triangleq \mathcal{H}_{T_1},
\label{eq:stage1_output_H}
\end{equation}
which has an \emph{adaptive} cardinality: as the posterior confidence region shrinks around the
threshold $\tau$, indices whose power is confidently above $\tau$ are accumulated in
$\mathcal{H}_{T_1}$, while those confidently below $\tau$ are discarded into $\mathcal{L}_{T_1}$
\cite{gotovos2013lse}. Therefore, $|\widehat{\mathcal{S}}_{\mathrm{LSE}}|$ is automatically adjusted
to the observed beamspace power profile, which is desirable for multi-path channels where the effective support size can vary across physical environments and frequency bands.

\subsubsection{Stage~I overhead in terms of information gain.}
Under the bounded-noise event $\mathcal{E}_{\mathrm{bd}}(T_1)$ in Appendix~\ref{app:stat_stage1}
and the RKHS condition $\|f\|_{\mathcal H_k}\le B_f$, the GP-LSE ambiguity
$\max_{i\in\mathcal{U}_t} a_t(i)$ drops below $\epsilon$ after a number of probes characterized by the
information gain of kernel $k$ \cite{gotovos2013lse,srinivas2010gpucb}.
In particular, using the Theorem~6 exploration schedule \Cref{eq:lse_beta_thm6_main},
a sufficient condition for termination is
\begin{equation}
T_1
= \tilde{\mathcal{O}}\!\Big(\tfrac{\beta_{T_1}\gamma_{T_1}}{\epsilon^2}\Big)
= \tilde{\mathcal{O}}\!\Big(\tfrac{B_f^2\gamma_{T_1}+\gamma_{T_1}^2}{\epsilon^2}\Big).
\label{eq:lse_overhead_scaling}
\end{equation}
Equivalently, when the algorithm is run until the
stopping condition $\max_{i\in\mathcal{U}_t} a_t(i)\le \epsilon$ (or $\mathcal{U}_t=\emptyset$) is met,
the resulting level-set estimate satisfies the standard $\epsilon$-accurate inclusion
\begin{equation}
\big\{i:\ f(i)\ge \tau+\epsilon\big\}\subseteq \mathcal{H}_{t}\subseteq \big\{i:\ f(i)\ge \tau-\epsilon\big\}
\label{eq:lse_inclusion_maintext}
\end{equation}
with probability at least $1-(\delta_{\mathrm{bd}}+\delta)$. A self-contained derivation of \Cref{eq:lse_inclusion_maintext} and \Cref{eq:lse_overhead_scaling}  is provided in Appendix~\ref{app:lse_overhead}.

\subsubsection{Practical support truncation under a fixed probing budget}
In principle, GP-LSE can be run until all indices are classified, i.e., $\mathcal{U}_t=\emptyset$, or
until the ambiguity vanishes, $\max_{i\in\mathcal{U}_t}a_t(i)\le\epsilon$ \cite{gotovos2013lse}. However,
in practical beam training we operate under a fixed probing budget $T_1$, and it is possible that a
small residual undecided set remains, i.e., $\mathcal{U}_{T_1}\neq\emptyset$, because the algorithm has
not yet fully separated all beams into $\mathcal{H}_{T_1}$ or $\mathcal{L}_{T_1}$. In this case, we
apply a lightweight \emph{fallback patch} to produce a fixed-cardinality support for Stage~II: we retain
the confidently strong beams in $\mathcal{H}_{T_1}$, and then augment/truncate using the posterior mean
ranking. Concretely, if $\mathcal{U}_{T_1}=\emptyset$, we directly set
$\widehat{\mathcal{S}}=\mathcal{H}_{T_1}$. Otherwise, when $\mathcal{U}_{T_1}\neq\emptyset$, we form
\begin{equation}
\widehat{\mathcal{S}}
~\triangleq~
\operatorname{Top}\text{-}K\Big(\{\mu_{T_1}(i): i\in\mathcal{H}_{T_1}\cup\mathcal{U}_{T_1}\}\Big),
\label{eq:stage1_topK_mu_patch}
\end{equation}
where $\operatorname{Top}\text{-}K(\cdot)$ returns the indices of the $K$ largest values. This patch is
\emph{not} used when GP-LSE has already completed the classification; it is only invoked to handle the
fixed-budget setting and to guarantee a well-defined, computationally manageable support size for the
Stage~II Gaussian-masked sensing and sparse PR refinement. The parameter $K$ can be selected using the
sparsity characterization in \Cref{sec:upa_beam_pattern} or treated as a design parameter trading
refinement complexity and robustness.
At this point, we stop exploring and convert the learned posterior into a support estimate.

\subsection{Stage II: Gaussian-Masked DFT Sensing and Rician-SPARTA Refinement}
\label{subsec:stage2}

{ Following the completion of Stage I, the BS has identified an estimate of the dominant DFT-beam support $\widehat{\mathcal{S}} \subset \mathcal{I}$, where $|\widehat{\mathcal{S}}| = K \ll N$. }
{ We define the transformation matrix}
\begin{equation}
\mathbf{F}
\triangleq
\big[
\mathbf{f}_{0,0},\,
\ldots,\,
\mathbf{f}_{N_y-1,N_z-1}
\big]^{H}
\in\mathbb{C}^{N\times N}.
\end{equation}
{ as the collection of all unit-norm 2D DFT beams as rows, ordered according to $\mathcal{I}$. Under this representation, the DFT beamspace coefficients are given by $\mathbf{s} = \mathbf{F}\mathbf{h}$, with the inverse relation $\mathbf{h} = \mathbf{F}^{H}\mathbf{s}$. }
{ Let $\mathbf{F}_{\widehat{\mathcal{S}}} \in \mathbb{C}^{K \times N}$ denote the submatrix consisting of rows of $\mathbf{F}$ indexed by $\widehat{\mathcal{S}}$, and let $\mathbf{s}_{\widehat{\mathcal{S}}} \in \mathbb{C}^{K}$ be the corresponding beamspace coefficients. As characterized in Section \ref{sec:upa_beam_pattern}, the channel can be approximated within this reduced subspace as:}
\begin{equation}
\mathbf{h}
~\approx~
\mathbf{F}_{\widehat{\mathcal{S}}}^{H}\,\mathbf{s}_{\widehat{\mathcal{S}}},
\;
\|\mathbf{s}_{\widehat{\mathcal{S}}}\|_{0}\leq K,
\label{eq:stage2_beamspace_model}
\end{equation}
with sparsity level \(K\). 


{ In Stage II, the BS constructs a second set of $M_2$ pilot beams by linearly combining only the DFT beams within the estimated support $\widehat{\mathcal{S}}$. For each measurement index $p=1, \dots, M_2$, we draw an i.i.d. complex Gaussian mask $\mathbf{g}_p \sim \mathcal{CN}(\mathbf{0}, \frac{1}{K}\mathbf{I}_K)$ and synthesize the Gaussian-masked DFT beam:}
\begin{equation}
\mathbf{v}_p
~=~
\mathbf{F}_{\widehat{\mathcal{S}}}^{H}\mathbf{g}_p
\in\mathbb{C}^{N}.
\label{eq:gauss_masked_beam}
\end{equation}
The BS transmits a unit-modulus pilot through \(\mathbf{v}_p\), and the
user feeds back only the received amplitude (or power). The resulting
amplitude-only observation at slot \(p\) is
\begin{equation}
y_p ~=~ \big|\mathbf{h}^{H}\mathbf{v}_p + w_p\big|,
\;
w_p\sim\mathcal{CN}(0,\sigma^2),
\label{eq:stage2_rician_meas}
\end{equation}
{ where $y_p$ follows a Rician distribution with non-centrality parameter $|\mathbf{h}^{H}\mathbf{v}_p|$. By substituting \Cref{eq:stage2_beamspace_model} and \Cref{eq:gauss_masked_beam} into \Cref{eq:stage2_rician_meas}, we obtain an equivalent low-dimensional beamspace model:}
\begin{equation}
y_p
~=~
\big|\mathbf{s}_{\widehat{\mathcal{S}}}^{H}\mathbf{g}_p + w_p\big|,
\;
p=1,\dots,M_2.
\label{eq:stage2_lowdim_model}
\end{equation}
Hence, Stage~II reduces to recovering the sparse beamspace vector
\(\mathbf{s}_{\widehat{\mathcal{S}}}\) from Rician-corrupted magnitude
measurements with i.i.d. complex Gaussian sensing
vectors \(\{\mathbf{g}_p\}_{p=1}^{M_2}\).

{ Directly applying standard PR solvers to $y_p$ often leads to performance degradation due to the bias introduced by the noise in the magnitude domain. To mitigate this, we observe that $\mathbb{E}[y_p^2 \,|\, \mathbf{g}_p] = |\mathbf{s}_{\widehat{\mathcal{S}}}^{H} \mathbf{g}_p|^2 + \sigma^2$. We therefore perform Rician denoising by defining the pseudo-amplitude $\tilde{\psi}_p$:}
\begin{equation}
\tilde{\psi}_p
~\triangleq~
\sqrt{\big[y_p^2-\sigma^2\big]_+}.
\label{eq:rician_denoise}
\end{equation}

The resulting
\(\tilde{\psi}_p\) serves as a {pseudo-amplitude} that approximates
\(|\mathbf{s}_{\widehat{\mathcal{S}}}^{H}\mathbf{g}_p|\) while absorbing the residual Rician fluctuations into an effective noise term. This
yields an amplitude-based sparse PR model of the form
\begin{equation}
\tilde{\psi}_p
~=~
\big|\mathbf{s}_{\widehat{\mathcal{S}}}^{H}\mathbf{g}_p\big|
+\tilde{\eta}_p,
\; p=1,\dots,M_2,
\label{eq:stage2_amp_model}
\end{equation}
i.e., in our setting, we apply the Rician debiasing and the pseudo-amplitude mapping \Cref{eq:rician_denoise}, which yields an equivalent noisy-amplitude model \Cref{eq:stage2_amp_model} (see proof in \Cref{app:stat_stage2}).
Therefore, Stage~II can be viewed as running a robust truncated amplitude-flow solver on noisy amplitudes \cite{wang2018sparta_tsp}.

Given the denoised amplitudes \(\{\tilde{\psi}_p\}\) and sensing vectors
\(\{\mathbf{g}_p\}\), we seek a \(k\)-sparse estimate of
\(\mathbf{s}_{\widehat{\mathcal{S}}}\) by minimizing an amplitude-based
least-squares loss subject to sparsity:
\begin{equation}
\min_{\mathbf{z}\in\mathbb{C}^{K}}
\frac{1}{M_2}\sum_{p=1}^{M_2}
\big(\tilde{\psi}_p-\big|\mathbf{g}_p^{H}\mathbf{z}\big|\big)^2
\;
\text{s.t. }\|\mathbf{z}\|_0\leq k\leq K.
\label{eq:stage2_sparta_objective}
\end{equation}

We adopt a SPARTA-style solver\cite{wang2018sparta_tsp} that combines truncated
amplitude-based gradient steps with hard thresholding, but with two
modifications tailored to our setting: {(i)} the algorithm
operates only on the reduced beamspace supported on
\(\widehat{\mathcal{S}}\), and {(ii)} the amplitudes in
\Cref{eq:stage2_sparta_objective} are replaced by the Rician-denoised
pseudo-measurements \(\tilde{\psi}_p\) in \Cref{eq:rician_denoise}.

Concretely, let \(\mathbf{z}^{(0)}\) be an initialization restricted to
\(\widehat{\mathcal{S}}\), which can be got by a scaled principal eigenvector of a
sample covariance matrix formed from the largest \(\tilde{\psi}_p\)'s,
and denote \(\mathcal{I}_t\subseteq\{1,\dots,M_2\}\) as the truncation
set at iteration \(t\), which retains only those indices whose amplitude residuals are not too large. A generic Rician-SPARTA refinement
iteration takes the form
\begin{align}
\mathbf{z}^{(t+\frac{1}{2})}
&= \mathbf{z}^{(t)}
- \mu_t
\frac{1}{|\mathcal{I}_t|}
\sum_{p\in\mathcal{I}_t}
\Big(\big|\mathbf{g}_p^{H}\mathbf{z}^{(t)}\big|
      -\tilde{\psi}_p\Big)
\frac{\mathbf{g}_p\mathbf{g}_p^{H}\mathbf{z}^{(t)}}
     {\big|\mathbf{g}_p^{H}\mathbf{z}^{(t)}\big|},
\label{eq:stage2_trunc_grad}
\\
\mathbf{z}^{(t+1)}
&= \mathcal{H}_k\big(\mathbf{z}^{(t+\frac{1}{2})}\big),
\label{eq:stage2_hard_threshold}
\end{align}
where \(\mu_t>0\) is a stepsize, and \(\mathcal{H}_k(\cdot)\) denotes the hard-thresholding
operator that keeps the \(k\) entries with the largest magnitudes and sets the rest to zero. The truncation rule defining \(\mathcal{I}_t\) follows
the standard SPARTA design \cite{wang2018sparta_tsp}, with thresholds applied to
the amplitude residuals
\(\big|\mathbf{g}_p^{H}\mathbf{z}^{(t)}\big|-\tilde{\psi}_p\) to
suppress the impact of outliers and residual Rician noise.

After a prescribed number of iterations, we obtain
\(\widehat{\mathbf{s}}_{\widehat{\mathcal{S}}}=\mathbf{z}^{(T)}\) and
form the final channel estimate
\begin{equation}
\widehat{\mathbf{h}}
~=~
\mathbf{F}_{\widehat{\mathcal{S}}}^{H}\,
\widehat{\mathbf{s}}_{\widehat{\mathcal{S}}},
\label{eq:stage2_final_h}
\end{equation}
which is then used for downlink beamforming. 

Compared with applying
SPARTA on the full DFT beamspace directly, the proposed Stage~II operates on a much smaller subspace identified by Stage~I and incorporates Rician
denoising via \Cref{eq:rician_denoise}, thereby reducing both the
required number of measurements and the computational cost, while better
matching the amplitude-only feedback model in near-field beam training.


{ Specializing the SPARTA theory \cite{wang2018sparta_tsp} to the ambient dimension $K$ (where $K = |\widehat{\mathcal{S}}|$ is the size of the discovered index set) implies that an additional $M_2$ measurements on the order of}
\begin{equation}
M_2 \;\gtrsim\; C\,k^2\log(K)
\label{eq:stage2_m2_scaling_text}
\end{equation}
are sufficient for exact recovery with high probability.

Importantly, Stage~II probes only $\mathrm{span}(\mathbf{F}_{\widehat{\mathcal{S}}}^{H})$, hence the noiseless
observations depend only on $\mathbf{s}_{\widehat{\mathcal{S}}}$ regardless of energy outside $\widehat{\mathcal{S}}$.
Consequently, for any estimate $\widehat{\mathbf{s}}$ supported on $\widehat{\mathcal{S}}$ and
$\widehat{\mathbf{h}}=\mathbf{F}^{H}\widehat{\mathbf{s}}$, unitarity of $\mathbf{F}$ yields the exact error decomposition
\begin{equation}
\|\widehat{\mathbf{h}}-\mathbf{h}\|_2^2
=\|\widehat{\mathbf{s}}_{\widehat{\mathcal{S}}}-\mathbf{s}_{\widehat{\mathcal{S}}}\|_2^2
+\|\mathbf{s}_{\widehat{\mathcal{S}}^c}\|_2^2,
\label{eq:error_decomp_support_text}
\end{equation}
which separates the algorithmic error on the discovered subspace and the irreducible tail caused by
false negatives. { False positives primarily enlarge $K$, thereby increasing the logarithmic term in \Cref{eq:stage2_m2_scaling_text} and raising computational cost, whereas the tail energy $\|\mathbf{s}_{\widehat{\mathcal{S}}^c}\|_2$ imposes a fundamental error floor when $K$ fails to cover the full support set. }
However, under Stage~I, there is a high probability that $\widehat{\mathcal{S}}$ coincides with the true dominant DFT support \cite{srinivas2012information}; consequently, the tail term $\|\mathbf{s}_{\widehat{\mathcal{S}}^c}\|_2$ in \Cref{eq:error_decomp_support_text} vanishes.

\begin{algorithm}[t]
\caption{Near-Field Beam Training via GP-LSE Assisted Sparse Phase Retrieval}
\label{alg:gb_spr}
\begin{algorithmic}
\small
\STATE \textbf{Input:} 2D DFT codebook $\{\mathbf{f}_{n,m}\}$ and transform $\mathbf{F}$; budgets $(T_1,M_2)$; noise variance $\sigma^2$;
kernel $k(\cdot,\cdot)$; LSE parameters $(\tau,\epsilon,\{\beta_t\})$; sparsity level $k$ for Stage~II.
\STATE \textbf{Output:} $\widehat{\mathcal{S}}$, $\widehat{\mathbf{s}}_{\widehat{\mathcal{S}}}$, $\widehat{\mathbf{h}}$, $\widehat{\mathbf{v}}$.

\vspace{0.2em}
\STATE \textbf{Stage I (GP-LSE on 2D DFT grid):} initialize $(\mathcal{H}_0,\mathcal{L}_0,\mathcal{U}_0)$ and $(\ell_0(\cdot),u_0(\cdot))$ as in \Cref{eq:lse_shrink}, with $\mathcal{D}_0=\emptyset$.
\FOR{$t=1,\dots,T_1$}
    \STATE Update GP posterior $(\mu_{t-1}(\cdot),\sigma_{t-1}(\cdot))$ via \Cref{eq:gp_post_mean}-\Cref{eq:gp_post_var}.
    \STATE Update $(\ell_t(\cdot),u_t(\cdot))$ and $(\mathcal{H}_t,\mathcal{L}_t,\mathcal{U}_t)$ via \Cref{eq:lse_shrink}-\Cref{eq:lse_U}.
    \STATE If $\mathcal{U}_t=\emptyset$ or $\max_{i\in\mathcal{U}_t}a_t(i)\le\epsilon$ with $a_t(\cdot)$ in \Cref{eq:lse_ambiguity}, \textbf{break}.
    \STATE Select $i_t\in\arg\max_{i\in\mathcal{U}_t} a_t(i)$ via \Cref{eq:lse_select}, probe $\mathbf{v}_{i_t}=\mathbf{f}_{i_t}$, and form $\tilde z_t$ via \Cref{eq:stage1_power_denoised}.
    \STATE Append $(i_t,\tilde z_t)$ to $\mathcal{D}_t$.
\ENDFOR
\STATE \textbf{Support:} set $\widehat{\mathcal{S}}=\mathcal{H}_{\widehat{T}_1}$ if $\mathcal{U}_{\widehat{T}_1}=\emptyset$, else apply the Top-$K$ patch \Cref{eq:stage1_topK_mu_patch}.

\vspace{0.2em}
\STATE \textbf{Stage II (Gaussian-masked DFT \& Rician-SPARTA):} let $K=|\widehat{\mathcal{S}}|$ and form $\mathbf{F}_{\widehat{\mathcal{S}}}$.
\FOR{$p=1,\dots,M_2$}
    \STATE Draw $\mathbf{g}_p\sim\mathcal{CN}(\mathbf{0},\frac{1}{K}\mathbf{I})$ and set $\mathbf{v}_p=\mathbf{F}_{\widehat{\mathcal{S}}}^{H}\mathbf{g}_p$ \Cref{eq:gauss_masked_beam}.
    \STATE Observe $y_p$ in \Cref{eq:stage2_rician_meas} and compute $\tilde{\psi}_p$ via \Cref{eq:rician_denoise}.
\ENDFOR
\STATE Obtain $\widehat{\mathbf{s}}_{\widehat{\mathcal{S}}}$ by running the Rician-SPARTA solver defined in \Cref{eq:stage2_sparta_objective} (updates \Cref{eq:stage2_trunc_grad}-\Cref{eq:stage2_hard_threshold}).
\STATE Recover $\widehat{\mathbf{h}}$ by \Cref{eq:stage2_final_h} and set $\widehat{\mathbf{v}}=\widehat{\mathbf{h}}/\|\widehat{\mathbf{h}}\|_2$.
\STATE \textbf{return} $\widehat{\mathcal{S}},\,\widehat{\mathbf{s}}_{\widehat{\mathcal{S}}},\,\widehat{\mathbf{h}},\,\widehat{\mathbf{v}}$.
\end{algorithmic}
\end{algorithm}
\section{Simulation Results}\label{sec:simulation}
{ This section evaluates the performance of the proposed two-stage near-field beam training framework through comprehensive numerical simulations. We consider a single BS equipped with a UPA of $N_y=128$ and $N_z=16$ antenna elements, operating at a carrier frequency of $f_c=28$~GHz. The total number of antenna elements is $N=N_yN_z=2048$. For each method, we reconstruct the channel estimate $\hat{\mathbf{h}}$ and evaluate the normalized correlation:}
\begin{equation}
\rho \triangleq \frac{\left|\mathbf{h}^{\mathsf{H}}\hat{\mathbf{h}}\right|}{\|\mathbf{h}\|_2\,\|\hat{\mathbf{h}}\|_2}\in[0,1],
\end{equation}
where $\rho \approx 1$ signifies near-perfect channel recovery. All results are averaged over $500$ Monte-Carlo trials.

\subsection{{ Performance under Varying SNR}
}\label{subsec:snr_sweep}
{ We first assess robustness against additive noise by sweeping the SNR from $-15$~dB to $15$~dB. Specifically, we set the noise power as
\(
\frac{\|h\|_2^2}{n\,\mathrm{SNR}},\text{ and }
\mathrm{SNR}=10^{\mathrm{SNR}_{\mathrm{dB}}/10},
\)
where $n$ is the channel dimension and $h$ is the user channel. In each trial, $L=6$ paths are generated with
normalized angular variables $u,v\sim\mathcal{U}(-0.5,0.5)$ and the propagation range $r$ drawn uniformly from the
near-field region $[r_{\mathrm{F}},\,r_{\mathrm{R}}]$.
These parameters are randomized across $500$ trials to calculate the mean channel correlation.
{ We compare the proposed LSE+R-SPARTA against three baselines:
\emph{(i) Exhaustive DFT Search}: A classical baseline that exhaustively probes all $N$ DFT beams and selects the one with the highest received power (pilot overhead: $N$).
\emph{(ii) R-SPARTA} and \emph{(iii) R-SWF}: Sparse phase retrieval solvers applied directly to the full DFT beamspace using Gaussian-masked sensing vectors.
\emph{(iv) LSE+R-SPARTA (Proposed)}: Our two-stage framework. Stage I performs structured exploration to identify a dominant beam subset ($T_1 = N$), and Stage II executes sparse PR on the resulting subspace ($M_2 = N$). Except for the exhaustive search, all PR-based methods utilize a total training overhead of $2N$. }

{ The results are illustrated in \Cref{fig:snr_sweep}. The proposed LSE+R-SPARTA achieves consistently high correlation across the entire SNR range, with performance improving monotonically from $0.693$ at $-15$~dB to $0.957$ at $15$~dB. The performance gain is particularly prominent in the low-SNR regime, where direct sparse phase retrieval is highly susceptible to noise-induced ambiguities and suboptimal initializations. By performing adaptive subset selection in Stage I, we effectively improve the conditioning and local SNR of the subsequent recovery, leading to more reliable convergence. For instance, at $-15$~dB, our method yields a mean correlation of $0.693$ compared to the $0.425$ achieved by exhaustive DFT, reaching an absolute gain of $0.268$. At $0$~dB, our method reaches $0.877$, whereas the non-adaptive R-SPARTA baseline only attains $0.648$. At high SNR, our framework remains highly competitive, slightly outperforming R-SWF at $15$~dB ($0.957$ vs. $0.952$). These results validate that the proposed two-stage design significantly bolsters robustness without sacrificing high-SNR accuracy.}
\begin{figure}[!t]
\centering
\includegraphics[width=0.7\linewidth]{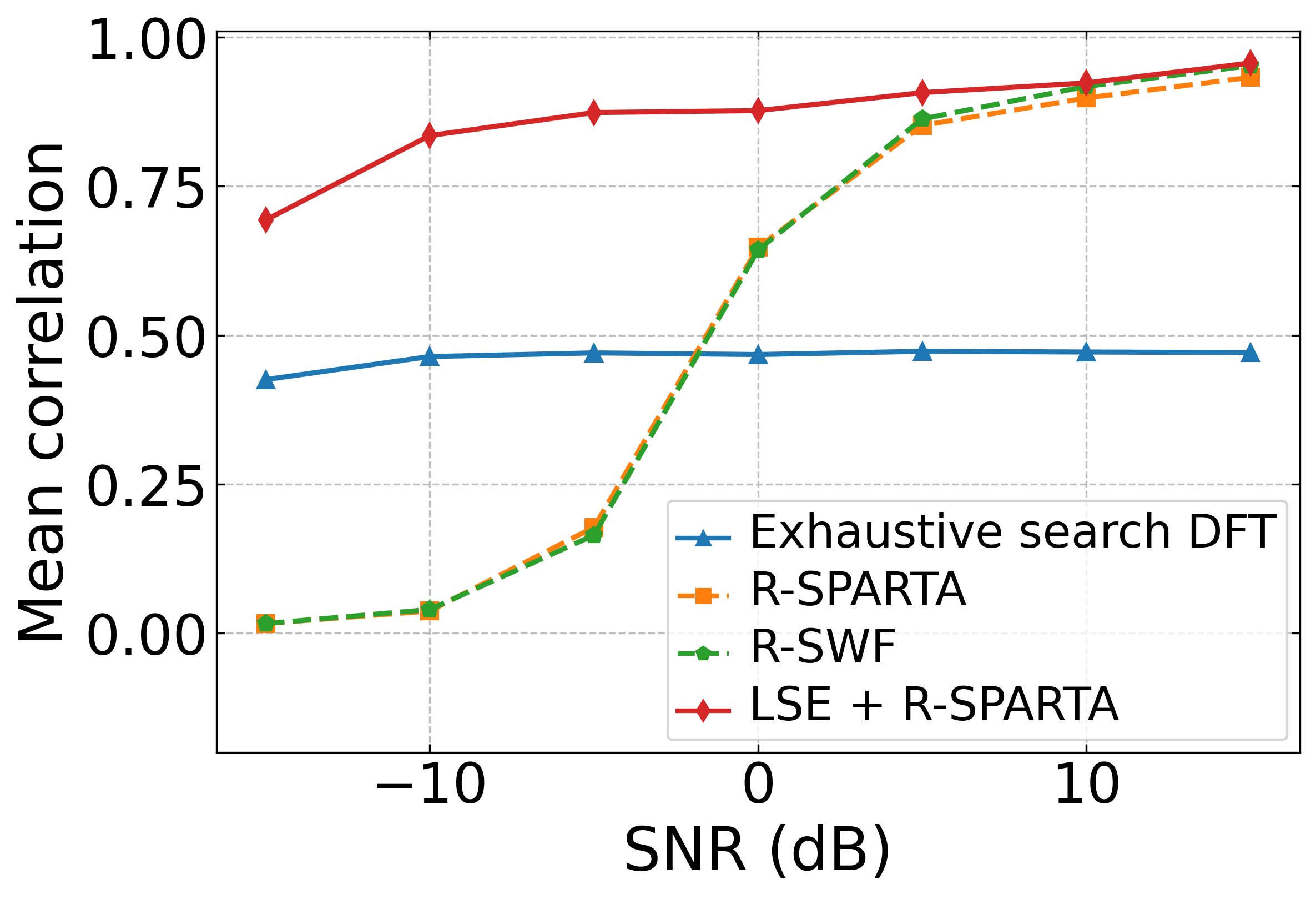}
\caption{Mean correlation versus SNR.}
\label{fig:snr_sweep}
\end{figure}

\subsection{Ablation Study}
{ To quantify the individual contributions of the key components within our two-stage framework, we perform an ablation study using the SNR sweep configuration described in \Cref{subsec:snr_sweep}. We evaluate the performance of the proposed plus-kernel LSE+R-SPARTA against two modified variants, reporting the mean channel correlation at each SNR point as illustrated in \Cref{fig:ablation}.

\textbf {Impact of Rician-Aware Recovery}: We first isolate the significance of the statistical refinement by comparing our framework against a plus-kernel LSE+SPARTA variant. This version disables Stage~I denoising and replaces the Rician-aware Stage~II solver with a standard SPARTA algorithm. As shown in the results, removing the Rician awareness leads to a consistent degradation in correlation across the entire SNR spectrum. Specifically, at $-15$~dB, the mean correlation falls from $0.693$ to $0.575$, while at $15$~dB, it decreases from $0.957$ to $0.945$. These findings confirm that integrating denoising and Rician-tailored recovery is critical for maintaining robust gains in practical non-coherent scenarios.

\textbf {Impact of Physically-Motivated Kernel Design}: We next examine the efficacy of the Stage~I kernel by switching to a Laplace-kernel LSE+R-SPARTA variant. This setup retains the Rician-aware recovery pipeline but replaces the proposed cross-shaped kernel with a generic Laplace kernel. While this modification maintains competitive performance at high SNR ($0.956$ vs. $0.957$ at $15$~dB), it suffers significantly in low-SNR regimes, where the mean correlation drops from $0.693$ to $0.565$ at $-15$~dB. The superior robustness of the plus-shaped kernel at low SNR stems from its ability to accurately model the cross-shaped spatial reward map of the near-field DFT beamspace. By providing a more accurate inductive bias during the exploration phase, the plus kernel enables the BS to identify a more reliable beam support under heavy noise, thereby providing a better-conditioned subspace for the subsequent refinement stage.

}

\begin{figure}[!t]
\centering
\includegraphics[width=0.7\linewidth]{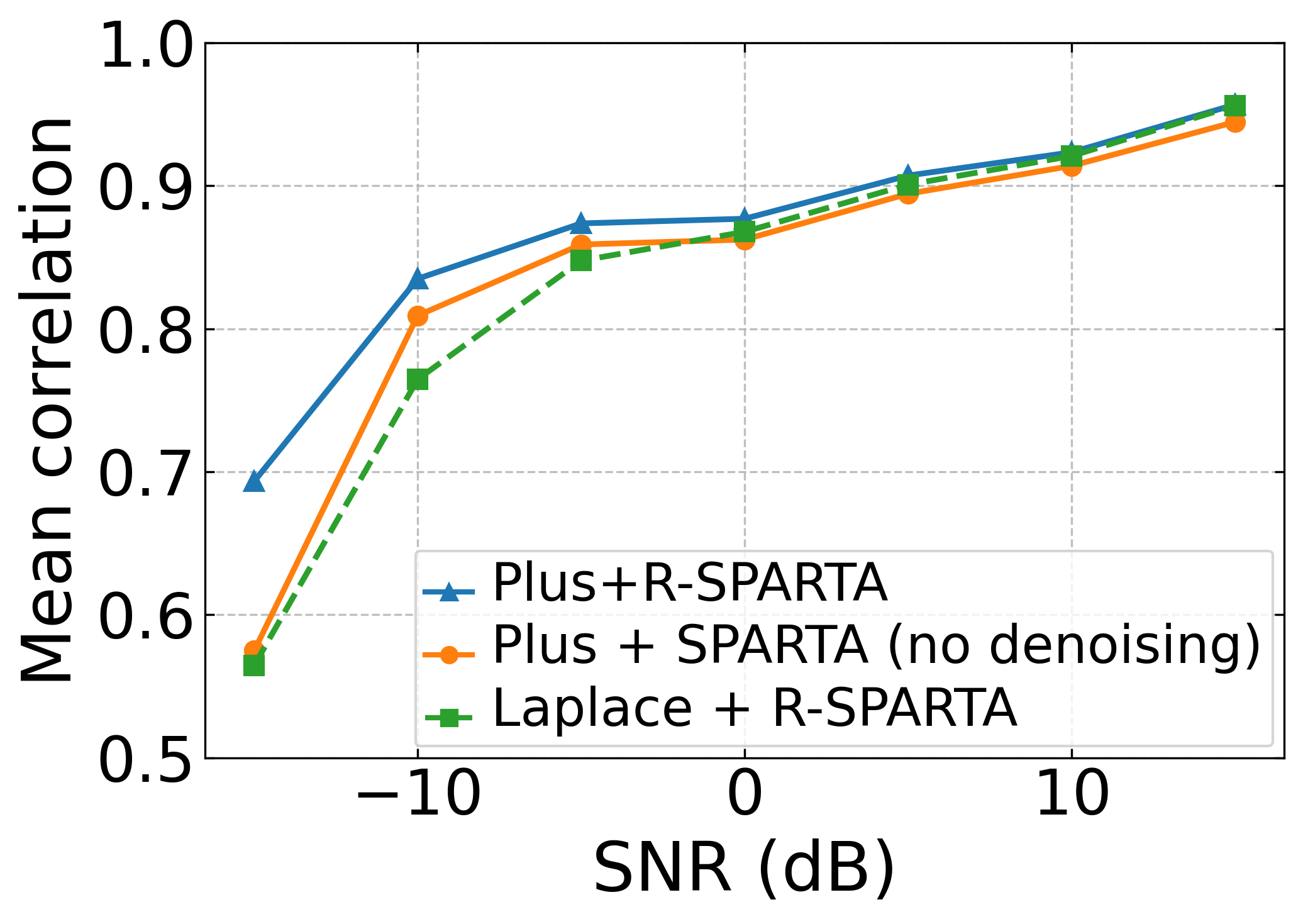}
\caption{Mean correlation versus SNR for ablated methods.}
\label{fig:ablation}
\end{figure}

\tightsubsection{Impact of Number of Paths}\label{subsec:path}
We next evaluate the resilience of the proposed framework to varying channel sparsity by sweeping the number of propagation paths $L \in \{2, 4, 6, 8, 10, 12\}$. All other simulation parameters remain consistent with the setup in \Cref{subsec:snr_sweep}. The resulting mean channel correlation $\rho$ is illustrated in \Cref{fig:path}. As the multipath environment becomes increasingly complex (higher $L$), the baseline methods exhibit significant performance degradation. For the exhaustive DFT search, the mean correlation drops monotonically from 0.607 at $L=2$ to 0.396 at $L=12$. This sharp decline confirms that selecting a single best DFT beam is fundamentally inadequate for capturing the diverse spatial energy of a rich multipath environment. Similarly, the performance of R-SPARTA and R-SWF declines to 0.700 and 0.791 at $L=12$, respectively. In contrast, LSE+R-SPARTA demonstrates remarkable stability, maintaining a correlation above 0.915 across all tested values of $L$. Notably, the performance even improves slightly as $L$ increases, reaching 0.942 at $L=12$. This translates to a relative improvement of approximately 34.5 $\%$ over R-SPARTA and 19.0 $\%$  over R-SWF in the most complex scenario ($L=12$). 

These results underscore the strength of our two-stage design: the structured exploration in Stage I adaptively identifies a beam subset that remains highly informative even as the multipath richness grows. By effectively "masking" the relevant subspace, the framework provides a stabilized environment for the subsequent Stage II recovery, whereas non-adaptive methods struggle to resolve the increased interference between paths.



\begin{figure}[!t]
\centering
\includegraphics[width=0.67\linewidth]{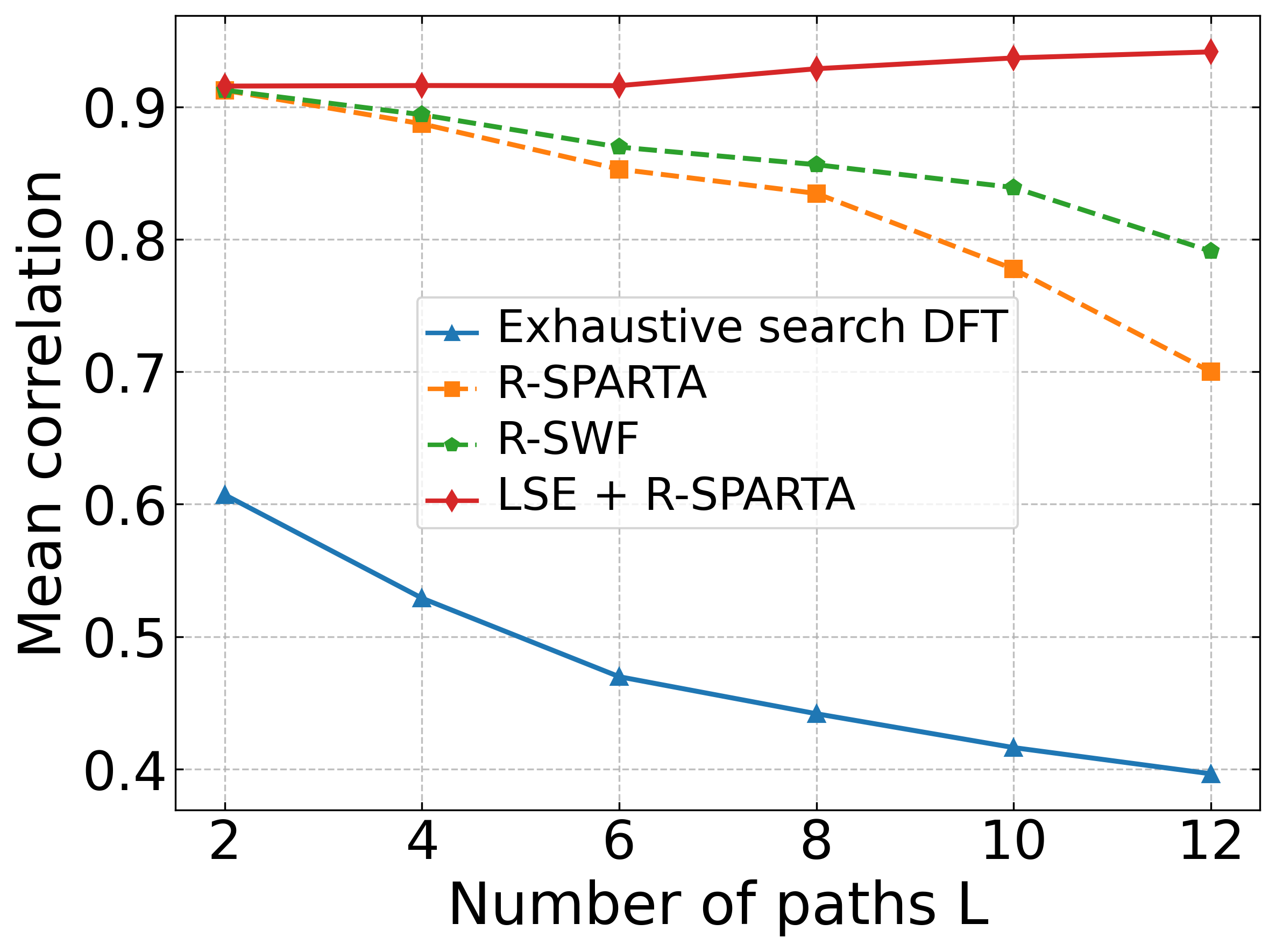}
\caption{Mean correlation versus number of path.}
\label{fig:path}
\end{figure}
\subsection{Impact of User Distance and Near-Field Robustness}\label{subsec:distance}
{ We next evaluate the resilience of the proposed framework to varying user-BS distances by sweeping the user range from $10$ m to $80$ m in increments of $10$ m. For this experiment, the multipath complexity is fixed at $L=4$ paths. At each distance point, we independently randomize the angles and scattering parameters for all paths, ensuring the user's primary range matches the prescribed grid value. The resulting mean channel correlation $\rho$ is illustrated in \Cref{fig:distance}.Across the entire evaluation range, LSE+R-SPARTA maintains a consistently high performance that is largely insensitive to distance variations, with the mean correlation stabilized near $0.95$. In contrast, the R-SPARTA and R-SWF baselines achieve lower correlation values (approximately $0.91$) and exhibit a slight performance degradation as the distance increases. The exhaustive DFT search remains the least effective baseline, showing significantly lower correlation and higher sensitivity to distance changes.

Specifically, at a close range of $10$ m, LSE+R-SPARTA achieves a correlation of $0.955$, outperforming R-SPARTA ($0.917$) by $4.2\%$ and the exhaustive DFT search ($0.459$) by $108.3\%$. This superior performance persists even at a larger range of $80$ m, where our method maintains a correlation of $0.955$, exceeding R-SPARTA ($0.907$) by $5.3\%$ and the exhaustive search ($0.527$) by $81.2\%$. These findings demonstrate that the structured exploration in Stage I effectively identifies a robust beam subset that captures the spherical wavefront characteristics reliably across a broad span of near-field distances.

}


\begin{figure}[!t]
\centering
\includegraphics[width=0.67\linewidth]{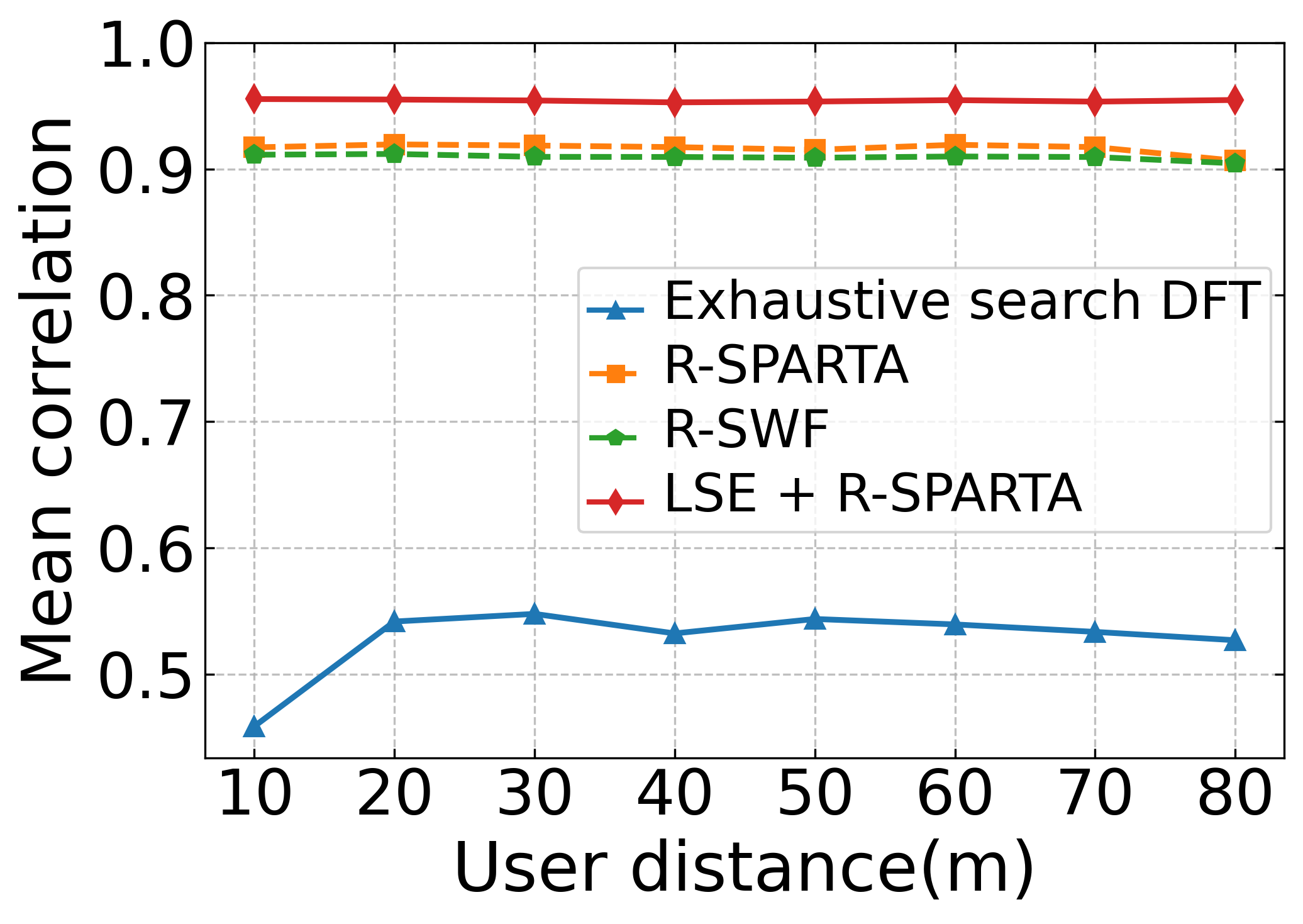}
\caption{Mean correlation versus user distance.}
\label{fig:distance}
\end{figure}

\section{Conclusion}\label{sec:clu}
This paper proposed a two-stage framework for amplitude-only near-field beam training in multipath channels. 
Stage~I performs GP-LSE-guided structured exploration on the standard 2D DFT grid using a physics-motivated cross-pattern-kernel that matches the near-field beamspace locality, enabling efficient discovery of dominant DFT support from debiased power feedback. 
Stage~II then refines the channel over the learned subspace via Gaussian-masked sparse phase retrieval with a Rician-aware pseudo-amplitude denoising step, improving robustness under non-coherent measurements. 
Numerical results show that the proposed method consistently outperforms non-adaptive baselines, achieving a $35.3\%$ relative improvement in channel correlation at moderate SNR, and remains stable across different multipath complexities and user distances. 
Overall, the proposed design offers a scalable and practical solution for near-field beam training when phase coherence cannot be guaranteed.

\appendices
\crefalias{section}{appendix}
\section{Derivation of Sparsity Level for UPA}\label{app:sparsity_upa_derivation}
Under the separable lobe approximation, the number of active 2D DFT indices per path equals the occupied area divided by the grid cell size:
\begin{equation}
K_{\text{per-path}}
~\approx~
\frac{B_y(u,r)\,B_z(v,r)}{\Delta_y\Delta_z}
=
N_yN_zd^2\frac{(1-u^2)(1-v^2)}{r^2\,\Delta_y\Delta_z}.
\end{equation}
For $L$ disjoint paths, $\mathbb{E}[K_{\mathrm{UPA}}]\approx L\,\mathbb{E}[K_{\text{per-path}}]$, which gives \Cref{eq:sparsity_upa_general}.

Because $u=\sqrt{1-v^2}\,s$ with $s=\sin\phi$, we compute
\begin{align}
\mathbb{E}\!\left[(1-u^2)(1-v^2)\right]
&=\mathbb{E}\!\left[(1-v^2)\left(1-(1-v^2)s^2\right)\right]\nonumber\\
&=\mathbb{E}[1-v^2]-\mathbb{E}\!\left[(1-v^2)^2\right]\mathbb{E}[s^2],
\end{align}
where we used the independence $v\perp s$. With $v\sim\mathcal{U}[v_1,v_2]$ and $s\sim\mathcal{U}[s_1,s_2]$,
\begin{align}
\mathbb{E}[v^2]&=\frac{v_2^3-v_1^3}{3(v_2-v_1)}\triangleq \mu_2,\;
\mathbb{E}[v^4]=\frac{v_2^5-v_1^5}{5(v_2-v_1)}\triangleq \mu_4,\nonumber\\
\mathbb{E}[s^2]&=\frac{s_2^3-s_1^3}{3(s_2-s_1)}\triangleq \nu_2,
\end{align}
and
\[
\mathbb{E}[1-v^2]=1-\mu_2,\;
\mathbb{E}[(1-v^2)^2]=1-2\mu_2+\mu_4,
\]
which yields \Cref{eq:Xi_us}. Finally, since $r\sim\mathcal{U}[r_1,r_2]$,
\[
\mathbb{E}\!\left[\frac{1}{r^2}\right]
=
\frac{1}{r_2-r_1}\int_{r_1}^{r_2}\frac{1}{r^2}\,dr
=\frac{1}{r_2-r_1}\left(\frac{1}{r_1}-\frac{1}{r_2}\right),
\]
which gives \Cref{eq:Xi_r}. Substituting these into \Cref{eq:sparsity_upa_general} yields \Cref{eq:sparsity_upa_closed}.

\section{Unbiasedness and noise statistics of Stage~I power feedback}
\label{app:stat_stage1}

Let $y_t=s_t+w_t$ with $s_t\triangleq \mathbf{h}^H\mathbf{v}_{i_t}x_t$ and
$w_t\sim\mathcal{CN}(0,\sigma^2)$. Expanding the squared magnitude yields
\begin{equation}
|y_t|^2 = |s_t|^2 + |w_t|^2 + 2\Re\{s_t^{*}w_t\}.
\end{equation}
Therefore, conditioning on $\mathbf{h}$ (equivalently on $s_t$) gives
\begin{equation}
\mathbb{E}\!\left[|y_t|^2\,\big|\,\mathbf{h}\right]
= |s_t|^2 + \mathbb{E}[|w_t|^2]
= f(i_t)+\sigma^2,
\end{equation}
which justifies $\tilde z_t=|y_t|^2-\sigma^2$ as a conditionally unbiased estimate of $f(i_t)$.
The centered residual admits the decomposition
\begin{equation}
\tilde\varepsilon_t
= \tilde z_t - f(i_t)
= (|w_t|^2-\sigma^2) + 2\Re\{s_t^{*}w_t\}.
\label{eq:app_stage1_residual_decomp}
\end{equation}

For $w_t\sim\mathcal{CN}(0,\sigma^2)$, $|w_t|^2$ follows an exponential distribution with mean
$\sigma^2$ and variance $\sigma^4$, hence $(|w_t|^2-\sigma^2)$ is sub-exponential (light-tailed).
Moreover, $2\Re\{s_t^{*}w_t\}$ is real Gaussian with zero mean and variance $2\sigma^2|s_t|^2
=2\sigma^2 f(i_t)$. Consequently, the conditional variance of the power feedback is
\begin{equation}
\mathrm{Var}\!\left(|y_t|^2\,\big|\,\mathbf{h}\right)
=\sigma^4 + 2\sigma^2 f(i_t),
\label{eq:app_stage1_power_var}
\end{equation}
which is heteroscedastic across probed beams.

We further assume that the beamspace power map is uniformly bounded,
\begin{equation}\label{eq:app_fmax_assump}
0\le f(i)\le f_{\max},\; \forall i\in\mathcal{I},
\end{equation}
where $f_{\max}$ is a finite constant induced by practical transmit-power and array-gain limits.
Under \Cref{eq:app_fmax_assump}, \Cref{eq:app_stage1_power_var} implies the uniform variance bound
\begin{equation}\label{eq:app_stage1_var_ub}
\mathrm{Var}\!\left(\tilde\varepsilon_t\,\big|\,\mathbf{h}\right)
=\sigma^4+2\sigma^2 f(i_t)
\le \sigma^4+2\sigma^2 f_{\max}.
\end{equation}

The residual $\tilde\varepsilon_t$ in \Cref{eq:app_stage1_residual_decomp} is a sum of a centered exponential term
and a Gaussian term, hence it is conditionally sub-exponential. In particular, there exist absolute constants
$c_0,c_1>0$ such that for all $u\ge 0$,
\begin{equation}\label{eq:app_stage1_subexp_tail}
\Pr\!\Big(|\tilde\varepsilon_t|\ge u \,\big|\,\mathbf{h}\Big)
\le
2\exp\!\Bigg(
-c_0 \min\Big\{
\tfrac{u^2}{\sigma^4+2\sigma^2 f_{\max}},
\ \tfrac{u}{\sigma^2}
\Big\}
\Bigg).
\end{equation}

We next use the sub-exponential tail bound \Cref{eq:app_stage1_subexp_tail} to construct a high-probability
bounded-noise event over a finite Stage~I budget $T_1$. This event is only an analysis device; the
implementation always uses the original debiased feedback $\tilde z_t$.

Fix $\delta_{\mathrm{bd}}\in(0,1)$ and define, for $t=1,\dots,T_1$, the deterministic threshold
\begin{equation}\label{eq:app_Bt_def}
B_t \triangleq c_1\Bigg(
\sigma^2\log\!\Big(\tfrac{\pi^2 t^2}{12\delta_{\mathrm{bd}}}\Big)
+
\sqrt{\big(\sigma^4+2\sigma^2 f_{\max}\big)\log\!\Big(\tfrac{\pi^2 t^2}{12\delta_{\mathrm{bd}}}\Big)}
\Bigg),
\end{equation}
where $c_1>0$ is an absolute constant as in \Cref{eq:app_stage1_subexp_tail}. Define the event
\begin{equation}\label{eq:app_Ebd}
\mathcal{E}_{\mathrm{bd}}(T_1)
\triangleq
\bigcap_{t=1}^{T_1}\big\{|\tilde\varepsilon_t|\le B_t\big\}.
\end{equation}
By \Cref{eq:app_stage1_subexp_tail} and a union bound over $t=1,\dots,T_1$, the choice \Cref{eq:app_Bt_def}
ensures
\begin{equation}\label{eq:app_Ebd_prob}
\Pr\!\big(\mathcal{E}_{\mathrm{bd}}(T_1)\big)\ge 1-\delta_{\mathrm{bd}}.
\end{equation}
On $\mathcal{E}_{\mathrm{bd}}(T_1)$, the Stage~I feedback obeys
\begin{equation}\label{eq:app_bd_model}
\tilde z_t = f(i_t)+\tilde\varepsilon_t,
\;
|\tilde\varepsilon_t|\le B_t\le B_{T_1},
\ \ t=1,\dots,T_1.
\end{equation}
Since bounded random variables are sub-Gaussian, Hoeffding's lemma implies that on
$\mathcal{E}_{\mathrm{bd}}(T_1)$ the noise $\tilde\varepsilon_t$ is conditionally sub-Gaussian with
a conservative variance proxy on the order of $B_{T_1}^2$. Accordingly, in the GP posterior update
\Cref{eq:Ktilde} we may use an effective parameter $\sigma_\varepsilon^2 \gtrsim B_{T_1}^2$ for analysis.

\section{GP-LSE Guarantee}
\label{app:lse_overhead}

We provide a self-contained proof sketch of the standard GP-LSE guarantee on the finite domain $\mathcal I$.
Throughout, $(\mu_{t-1}(\cdot),\sigma_{t-1}(\cdot))$ denote the GP posterior mean and standard deviation
computed via kernel ridge regression with regularization $\sigma_\varepsilon^2$ in \Cref{eq:Ktilde}.
All results are stated on the bounded-noise event $\mathcal{E}_{\mathrm{bd}}(T_1)$ in \Cref{eq:app_Ebd},
under which $\tilde z_t=f(i_t)+\tilde\varepsilon_t$ and $|\tilde\varepsilon_t|\le B_{T_1}$ for all $t\le T_1$
and thus the noise is sub-Gaussian.

\subsection{Uniform confidence event}
\label{app:uniform_ci_thm6}

We work on the bounded-noise event $\mathcal E_{\mathrm{bd}}(T_1)$ constructed in
Appendix~\ref{app:stat_stage1}. On $\mathcal E_{\mathrm{bd}}(T_1)$, the debiased feedback obeys
\begin{equation}\label{eq:app_bd_model_recall}
\tilde z_t = f(i_t) + \tilde\varepsilon_t,\;
\mathbb E[\tilde\varepsilon_t \mid \mathcal F_{t-1}] = 0,\;
|\tilde\varepsilon_t| \le B_t,\ \forall t\le T_1,
\end{equation}
where $\{\mathcal F_t\}$ is the natural filtration and $\{B_t\}$ is defined in \Cref{eq:app_Bt_def}.
Since $\{B_t\}$ is nondecreasing by construction, it follows that
$|\tilde\varepsilon_t|\le B_{T_1}$ for all $t\le T_1$.

We further assume the standard RKHS complexity condition used by kernelized bandit analyses:
the unknown power map $f$ is a fixed function in $\mathcal H_k$ with $\|f\|_{\mathcal H_k}\le B_f$. This is an agnostic function-class condition: it requires that $f$ be regular under $k$.
Because our cross-pattern kernel \Cref{eq:plus_laplace_kernel} encodes the near-field DFT-beam locality observed in
\Cref{fig:upa_mainlobe}-\Cref{fig:upa_main_lobe2}, it is reasonable that the induced RKHS complexity of $f$ is moderate.

Recall the GP confidence interval in the main text:
{\small\begin{equation}\label{eq:app_Qt_def_recall}
Q_t(i)
\triangleq
\Big[\mu_{t-1}(i)-\sqrt{\beta_t}\sigma_{t-1}(i),\;
     \mu_{t-1}(i)+\sqrt{\beta_t}\sigma_{t-1}(i)\Big],
\end{equation}}
which matches \Cref{eq:lse_Q}. The shrinking interval used by GP-LSE is defined as
{\small\begin{equation}\label{eq:app_Ct_def}
C_t(i)\triangleq \bigcap_{s=1}^t Q_s(i),\;
\ell_t(i)=\min C_t(i),\; u_t(i)=\max C_t(i).
\end{equation}}

\begin{lemma}
\label{lem:app_uniform_ci_thm6}
Fix $\delta\in(0,1)$. Choose a nondecreasing exploration sequence
\begin{equation}\label{eq:app_beta_thm6}
\beta_t \;=\; 2B_f^2 \;+\; 300\,\gamma_{t-1}\,\log^3\!\Big(\tfrac{t}{\delta}\Big),
\; t\ge 1,
\end{equation}
where $\gamma_{t}$ is the maximum information gain of kernel $k$ on $\mathcal I$ under
noise parameter $\sigma_\varepsilon^2$.
Then, conditioned on $\mathcal E_{\mathrm{bd}}(T_1)$, with probability at least $1-\delta$,
for all $t\ge 1$ and all $i\in\mathcal I$,
\begin{equation}\label{eq:app_eventE_thm6}
\big|f(i)-\mu_{t-1}(i)\big|
\le
\sqrt{\beta_t}\,\sigma_{t-1}(i).
\end{equation}
Consequently, $f(i)\in Q_t(i)$ for all $i,t$, and thus $f(i)\in C_t(i)$ for all $i,t$.
\end{lemma}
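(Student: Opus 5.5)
The plan is to reduce the lemma to Theorem~6 of \cite{srinivas2010gpucb}, the agnostic RKHS confidence bound for GP-UCB under martingale-difference noise bounded uniformly by a constant $\sigma$. We only need to verify that its hypotheses hold on $\mathcal E_{\mathrm{bd}}(T_1)$ with the regularization $\sigma_\varepsilon^2$ used in \Cref{eq:Ktilde}. The conclusion \Cref{eq:app_eventE_thm6} is exactly the statement of that theorem. The final two assertions are bookkeeping:
\begin{itemize}
\item $f(i)\in Q_t(i)$ is a restatement of \Cref{eq:app_eventE_thm6} via \Cref{eq:app_Qt_def_recall}.
\item $f(i)\in C_t(i)$ follows because an element of every $Q_s(i)$, $s\le t$, lies in their intersection \Cref{eq:app_Ct_def}.
\end{itemize}

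\textbf{Step 1: the noise model.} On $\mathcal E_{\mathrm{bd}}(T_1)$ we have $|\tilde\varepsilon_t|\le B_{T_1}$ by \Cref{eq:app_bd_model_recall}. The martingale-difference property $\mathbb E[\tilde\varepsilon_t\mid\mathcal F_{t-1}]=0$ comes from Appendix~\ref{app:stat_stage1}. There, $i_t$ is $\mathcal F_{t-1}$-measurable because GP-LSE selects it from past data, and $w_t$ is fresh noise independent of the past, so the conditional unbiasedness computation carries over verbatim.

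\textbf{Step 2: matching the noise level and domain.} Theorem~6 is stated with the posterior computed at noise variance equal to the bound $\sigma^2$. We therefore take $\sigma_\varepsilon^2\ge B_{T_1}^2$, which is consistent with the conservative choice recorded in the text. Noise bounded by $B_{T_1}$ is also bounded by $\sigma_\varepsilon$, so the hypothesis is met, and $\gamma_t$ is defined in \Cref{eq:app_gamma_def} with the same $\sigma_\varepsilon^2$. The RKHS assumption $\|f\|_{\mathcal H_k}\le B_f$ is imposed directly. The kernel \Cref{eq:plus_laplace_kernel} satisfies $k(i,i)=1$, giving the normalization $k\le 1$ that Theorem~6 requires. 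The finite domain $\mathcal I$ poses no issue.

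\textbf{Step 3: invoke the theorem.} With the schedule \Cref{eq:app_beta_thm6}, Theorem~6 yields \Cref{eq:app_eventE_thm6} simultaneously for all $t$ and $i$ with probability at least $1-\delta$.

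\textbf{Main obstacle: conditioning on $\mathcal E_{\mathrm{bd}}(T_1)$.} Conditioning on this event could in principle destroy the zero-mean martingale property, since truncating a skewed (exponential) variable shifts its mean. I would avoid conditioning altogether and define truncated noise $\bar\varepsilon_t=\tilde\varepsilon_t\mathbf 1\{|\tilde\varepsilon_t|\le B_t\}-\mathbb E[\cdot\mid\mathcal F_{t-1}]$. This is a genuine bounded martingale difference sequence, with bound at most $2B_{T_1}$. On $\mathcal E_{\mathrm{bd}}(T_1)$ it differs from $\tilde\varepsilon_t$ only by a tiny deterministic bias, which the sub-exponential tail \Cref{eq:app_stage1_subexp_tail} makes of order $\delta_{\mathrm{bd}}$-small.

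I would then apply Theorem~6 to the truncated sequence. The small bias can be absorbed into the $2B_f^2$ slack by treating it as a perturbation of $f$ of small RKHS norm; alternatively, one can follow the paper's phrasing and state the result on the intersection of events, with total failure probability $\delta+\delta_{\mathrm{bd}}$. If this bias accounting proves too delicate, I would fall back on the sub-Gaussian self-normalized bound of Chowdhury--Gopalan, which gives the same form of $\beta_t$ up to constants.
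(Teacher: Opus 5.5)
Your reduction is the paper's own. Its proof simply asserts that on $\mathcal E_{\mathrm{bd}}(T_1)$ the noise $\tilde\varepsilon_t$ is a uniformly bounded martingale difference sequence, invokes Theorem~6 of \cite{srinivas2010gpucb} with the stated $\beta_t$, and reads off $f(i)\in Q_t(i)$ and $f(i)\in C_t(i)$ from the definitions, exactly as in your Steps~1--3. The obstacle you raise is genuine and the paper does not address it. Conditioning on $\mathcal E_{\mathrm{bd}}(T_1)$ shifts the conditional mean of the right-skewed term $|w_t|^2-\sigma^2$. Moreover, the event also constrains later slots whose beams depend on $\tilde\varepsilon_t$, so it tilts the law of $\tilde\varepsilon_t$ given $\mathcal F_{t-1}$ further. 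Your Step~1, which repeats the paper's claim, therefore does not verify the hypotheses of Theorem~6, and the truncation argument is where the proof actually lives.

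As sketched, that argument breaks at the absorption step. The bias $b_t=-\mathbb E[\tilde\varepsilon_t\mathbf 1\{|\tilde\varepsilon_t|>B_t\}\mid\mathcal F_{t-1}]$ is predictable, not deterministic: it depends on $f(i_t)$, and on $t$ through $B_t$, so it is not a fixed function on $\mathcal I$ and cannot be folded into $f$. Even with a single level $B_{T_1}$, its RKHS norm is controlled only through $\lambda_{\min}$ of the $N\times N$ Gram matrix; for $\alpha=1$ the RKHS contains only additive functions of $(n,m)$ and need not contain it at all. The stated schedule also has no spare slack in $2B_f^2$. Restating the result on an intersection of events fixes only the probability bookkeeping, not the bias.

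A clean repair uses linearity of the posterior mean in the data. On $\mathcal E_{\mathrm{bd}}(T_1)$ we have $\mu_{t-1}(i)=\bar\mu_{t-1}(i)+\mathbf k_{t-1}(i)^{\mathsf T}\widetilde{\mathbf K}_{t-1}^{-1}\mathbf b_{t-1}$, where $\bar\mu_{t-1}$ is built from the bias-free data $f(i_s)+\bar\varepsilon_s$. Theorem~6 applies to $\bar\mu_{t-1}$, since the queries remain predictable, but it requires $\sigma_\varepsilon\ge B_{T_1}+\max_s|b_s|$ rather than $B_{T_1}$. A feature-space push-through computation gives $|\mathbf k_{t-1}(i)^{\mathsf T}\widetilde{\mathbf K}_{t-1}^{-1}\mathbf b_{t-1}|\le\sigma_\varepsilon^{-1}\|\mathbf b_{t-1}\|_2\,\sigma_{t-1}(i)$. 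The sub-exponential tail gives $|b_s|\lesssim B_s\delta_{\mathrm{bd}}/s^2$, so this inflates $\sqrt{\beta_t}$ by only $O(\delta_{\mathrm{bd}})$.

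What you end up proving is the lemma with $\sqrt{\beta_t}$ replaced by $\sqrt{\beta_t}+O(\delta_{\mathrm{bd}})$, and with probability at least $1-\delta-\delta_{\mathrm{bd}}$ for the intersection of $\mathcal E_{\mathrm{bd}}(T_1)$ with the inflated confidence event, instead of a conditional $1-\delta$. That is all the inclusion argument and \Cref{eq:app_good_event_prob} actually use. The Chowdhury--Gopalan fallback does not avoid any of this: $|w_t|^2-\sigma^2$ is sub-exponential but not sub-Gaussian, so you would still have to truncate and pay the same bias, or switch to a Bernstein-type martingale inequality.
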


\begin{proof}
On $\mathcal E_{\mathrm{bd}}(T_1)$, the observation noise $\tilde\varepsilon_t$ in
\Cref{eq:app_bd_model_recall} is a uniformly bounded martingale difference sequence over $t\le T_1$.
Therefore, the RKHS-based GP bandit confidence analysis applies. In particular, the escape-event argument
and the martingale concentration step in Theorem~6 of \cite{srinivas2010gpucb} imply that the posterior
deviation is uniformly controlled by $\sqrt{\beta_t}\sigma_{t-1}(i)$ when $\beta_t$ is chosen as in
\Cref{eq:app_beta_thm6}. This yields \Cref{eq:app_eventE_thm6}. The last inclusion follows from the
definitions \Cref{eq:app_Qt_def_recall}-\Cref{eq:app_Ct_def}.
\end{proof}

Define the overall “good” event as
\begin{equation}\label{eq:app_good_event}
\mathcal E \triangleq \mathcal E_{\mathrm{bd}}(T_1)\cap \mathcal E_{\mathrm{ci}},
\end{equation}
where $\mathcal E_{\mathrm{ci}}$ denotes the uniform confidence event in \Cref{eq:app_eventE_thm6}.
Since $\Pr(\mathcal E_{\mathrm{bd}}(T_1))\ge 1-\delta_{\mathrm{bd}}$ and
$\Pr(\mathcal E_{\mathrm{ci}}\mid \mathcal E_{\mathrm{bd}}(T_1))\ge 1-\delta$,
\begin{equation}\label{eq:app_good_event_prob}
\begin{aligned}
\Pr(\mathcal E)
&=\Pr(\mathcal E_{\mathrm{bd}}(T_1)\cap \mathcal E_{\mathrm{ci}}) \\
&=\Pr(\mathcal E_{\mathrm{bd}}(T_1))\,\Pr(\mathcal E_{\mathrm{ci}}\mid \mathcal E_{\mathrm{bd}}(T_1)) \\
&\ge (1-\delta_{\mathrm{bd}})(1-\delta)
\ge 1-(\delta_{\mathrm{bd}}+\delta).
\end{aligned}
\end{equation}

\subsection{Correctness of the level-set estimate}

We now prove the $\epsilon$-accurate inclusion property under the event $\mathcal{E}$ in
\Cref{eq:app_good_event}. Recall the classification rule in \Cref{eq:lse_H}-\Cref{eq:lse_U} and the fact
$f(i)\in C_T(i)$ from Lemma~\ref{lem:app_uniform_ci_thm6}.

\paragraph{Upper inclusion.}
Fix any $i\in\mathcal{H}_T$. By the classification rule \Cref{eq:lse_H} we have
$\ell_T(i)>\tau-\epsilon$. On the event $\mathcal{E}$, Lemma~\ref{lem:app_uniform_ci_thm6} gives
$f(i)\in C_T(i)$, hence $f(i)\ge \min C_T(i)=\ell_T(i)$.
Combining the two inequalities yields $f(i)>\tau-\epsilon$, i.e.,
$i\in\{j\in\mathcal{I}: f(j)\ge \tau-\epsilon\}$.
Since $i\in\mathcal{H}_T$ was arbitrary, we conclude
\begin{equation}\label{eq:app_upper_inclusion}
\mathcal{H}_T\subseteq \{i\in\mathcal{I}: f(i)\ge \tau-\epsilon\}.
\end{equation}

\paragraph{Lower inclusion.}
Let $i$ satisfy $f(i)\ge \tau+\epsilon$. On $\mathcal{E}$, Lemma~\ref{lem:app_uniform_ci_thm6} implies
$f(i)\in C_t(i)$ for all $t$. In particular, $u_t(i)\ge f(i)\ge \tau+\epsilon$ for all $t$, so $i$ can
never satisfy the (lower) classification condition for $\mathcal{L}_t$ in \Cref{eq:lse_L}. Hence $i$ is
never permanently placed into $\mathcal{L}_t$. When the algorithm terminates, it enforces
$\mathcal{U}_T=\emptyset$, so every index is classified into either $\mathcal{H}_T$ or $\mathcal{L}_T$.
Therefore such $i$ must belong to $\mathcal{H}_T$, i.e.,
\begin{equation}\label{eq:app_lower_inclusion}
\{i:\ f(i)\ge \tau+\epsilon\}\subseteq \mathcal{H}_T.
\end{equation}

Combining \Cref{eq:app_upper_inclusion} and \Cref{eq:app_lower_inclusion} gives the desired
$\epsilon$-accurate level-set guarantee.

\subsection{Sample complexity via information gain}
\label{app:sample_complexity_ig}

We bound the number of GP-LSE queries until termination. The key link is that GP posterior variances
accumulate at most on the order of the kernel information gain; this is standard in GP-LSE analyses
\cite{srinivas2010gpucb,gotovos2013lse}.

Recall $Q_t(i)$ in \Cref{eq:app_Qt_def_recall} and $C_t(i)=\cap_{s\le t}Q_s(i)$ in \Cref{eq:app_Ct_def}.
Since $C_t(i)\subseteq Q_t(i)$, its width is no larger than that of $Q_t(i)$.
Thus, for any $i\in\mathcal{U}_t$,
\begin{equation}\label{eq:app_ambiguity_sigma}
\begin{aligned}
a_t(i)
&\le \tfrac{1}{2}\big(u_t(i)-\ell_t(i)\big)\\
&\le \tfrac{1}{2}\big(\max Q_t(i)-\min Q_t(i)\big)\\
&= \sqrt{\beta_t}\,\sigma_{t-1}(i).
\end{aligned}
\end{equation}


\begin{lemma}
\label{lem:app_var_ig}
Let $\{i_s\}_{s=1}^t$ be the queried sequence and let $\sigma_{s-1}^2(i_s)$ be the GP posterior variance
computed with regularization $\sigma_\varepsilon^2$ as in \Cref{eq:Ktilde}. Then (refer equation; sample complexity)
\begin{equation}\label{eq:app_logdet_decomp}
\log\det\!\Big(\mathbf{I}_t+\sigma_\varepsilon^{-2}\mathbf{K}_t\Big)
= \sum_{s=1}^t \log\!\Big(1+\sigma_\varepsilon^{-2}\sigma_{s-1}^2(i_s)\Big),
\end{equation}
where $\mathbf{K}_t$ is kernel matrix defined in \Cref{sec:stage1_gpucb}. Consequently,
\begin{equation}\label{eq:app_sum_sigma2}
\sum_{s=1}^t \sigma_{s-1}^2(i_s)
\le \frac{2}{\log(1+\sigma_\varepsilon^{-2})}\,\gamma_t
\triangleq C_1\,\gamma_t,
\end{equation}
where $\gamma_t$ is defined in \Cref{eq:app_gamma_def} and $C_1>0$ depends only on $\sigma_\varepsilon^2$.
\end{lemma}

\begin{proof}
The details can be found in \cite{srinivas2010gpucb} and \cite{gotovos2013lse} and here we provide a proof sketch. The identity \Cref{eq:app_logdet_decomp} is the standard incremental log-determinant decomposition for GP
posteriors. Since $\sigma_{s-1}^2(i_s)\in[0,k(i_s,i_s)]$ and 
$k(i,i)\le 1$, concavity of $\log(1+\sigma_\varepsilon^{-2}x)$ on $x\in[0,1]$ yields
$\log(1+\sigma_\varepsilon^{-2}x)\ge x\log(1+\sigma_\varepsilon^{-2})$.
Summing this inequality over $s$ and combining with \Cref{eq:app_logdet_decomp} gives
$\sum_{s=1}^t \sigma_{s-1}^2(i_s)\le \log\det(\mathbf{I}_t+\sigma_\varepsilon^{-2}\mathbf{K}_t)/\log(1+\sigma_\varepsilon^{-2})$.
Finally, by definition of $\gamma_t$ in \Cref{eq:app_gamma_def},
$\log\det(\mathbf{I}_t+\sigma_\varepsilon^{-2}\mathbf{K}_t)\le 2\gamma_t$.
\end{proof}

Let $m_t\triangleq \max_{i\in\mathcal{U}_{t-1}} a_t(i)$. By the GP-LSE selection rule \Cref{eq:lse_select},
$m_t=a_t(i_t)$. Moreover, $\{m_t\}$ is non-increasing because confidence sets shrink and the undecided set
only shrinks \cite{gotovos2013lse}. Hence $t\,m_t^2 \le \sum_{s=1}^t m_s^2=\sum_{s=1}^t a_s^2(i_s)$.

Using \Cref{eq:app_ambiguity_sigma} and monotonicity of $\{\beta_t\}$,
\begin{align}
\sum_{s=1}^t a_s^2(i_s)
&\le \sum_{s=1}^t \beta_s\,\sigma_{s-1}^2(i_s)
\le \beta_t \sum_{s=1}^t \sigma_{s-1}^2(i_s)
\le C_1\,\beta_t\,\gamma_t,
\end{align}
where the last step uses Lemma~\ref{lem:app_var_ig}. Therefore,
\begin{equation}\label{eq:app_amb_decay}
m_t \le \sqrt{\frac{C_1\,\beta_t\,\gamma_t}{t}}.
\end{equation}
A sufficient condition for the stopping criterion $m_t\le\epsilon$ is
\begin{equation}\label{eq:app_stop_cond}
t \ge \frac{C_1\,\beta_t\,\gamma_t}{\epsilon^2}.
\end{equation}
With $\beta_t$ chosen as in Theorem~6 of \cite{srinivas2010gpucb}, i.e., \Cref{eq:app_beta_thm6},
we obtain the stated Stage~I sample complexity
\begin{equation}
T_1
= \tilde{\mathcal{O}}\!\Big(\tfrac{\beta_{T_1}\gamma_{T_1}}{\epsilon^2}\Big)
= \tilde{\mathcal{O}}\!\Big(\tfrac{B_f^2\gamma_{T_1}+\gamma_{T_1}^2}{\epsilon^2}\Big).
\end{equation}

\section{Statistical Rationale of the Rician Denoising in Stage~II}
\label{app:stat_stage2}

This appendix explains why the Stage~II pseudo-amplitude
$\tilde{\psi}_p=\sqrt{[y_p^2-\sigma^2]_+}$ in \Cref{eq:rician_denoise}
can be interpreted as a noisy observation of the noiseless amplitude, thereby
connecting Stage~II to the standard noisy-amplitude sparse phase retrieval
formulation \cite{wang2018sparta_tsp}.

In Stage~II, the low-dimensional magnitude measurement is
$y_p=\big|\mathbf{g}_p^H\mathbf{s}_{\widehat{\mathcal{S}}}+w_p\big|$ with
$w_p\sim\mathcal{CN}(0,\sigma^2)$. Let
\begin{equation}
a_p \triangleq \big|\mathbf{g}_p^H\mathbf{s}_{\widehat{\mathcal{S}}}\big|
\end{equation}
denote the noiseless amplitude. Consider the debiased intensity
\begin{equation}
\tilde{z}_p \triangleq y_p^2-\sigma^2 .
\end{equation}
Using the same decomposition as in \Cref{app:stat_stage1} (with
$s\leftarrow a_p e^{j\varphi_p}$), we obtain the conditional unbiasedness
\begin{equation}
\tilde{z}_p
= a_p^2 + \varepsilon_p,
\;
\mathbb{E}\!\left[\varepsilon_p\,\big|\,\mathbf{g}_p\right]=0,
\label{eq:app_stage2_intensity}
\end{equation}
and the conditional variance
\begin{equation}
\mathrm{Var}\!\left(\varepsilon_p\,\big|\,\mathbf{g}_p\right)
=\sigma^4+2\sigma^2 a_p^2 .
\label{eq:app_stage2_var_eps}
\end{equation}
Hence, $\tilde{z}_p$ is a zero-mean-fluctuated observation of $a_p^2$ with
heteroscedastic variability.

Stage~II applies a nonnegativity projection and square-root mapping:
\begin{equation}
\tilde{\psi}_p
= \sqrt{\left[y_p^2-\sigma^2\right]_+}
= \sqrt{\left[a_p^2+\varepsilon_p\right]_+}.
\label{eq:app_stage2_psitilde}
\end{equation}
When the SNR is moderate-to-high so that $a_p$ is not too small for most $p$,
we can relate $\tilde{\psi}_p$ to $a_p$ through a deterministic bound.
Specifically, on the event $a_p^2+\varepsilon_p\ge 0$ (which is always enforced
by $[\cdot]_+$), by the mean value theorem applied to $\sqrt{x}$, we have
\begin{equation}
\big|\tilde{\psi}_p-a_p\big|
=\big|\sqrt{[a_p^2+\varepsilon_p]_+}-\sqrt{a_p^2}\big|
\le
\frac{|\varepsilon_p|}{\sqrt{[a_p^2+\varepsilon_p]_+}+a_p}.
\label{eq:app_stage2_mvt_bound}
\end{equation}
In particular, whenever $|\varepsilon_p|\le a_p^2/2$ (hence
$[a_p^2+\varepsilon_p]_+\ge a_p^2/2$), the denominator in
\Cref{eq:app_stage2_mvt_bound} is lower bounded as
$\sqrt{[a_p^2+\varepsilon_p]_+}+a_p \ge (1+1/\sqrt{2})a_p$, yielding
\begin{equation}
\big|\tilde{\psi}_p-a_p\big|
\le
\frac{|\varepsilon_p|}{(1+1/\sqrt{2})a_p}
\le
\frac{|\varepsilon_p|}{a_p}.
\label{eq:app_stage2_simple_bound}
\end{equation}
Such a condition is mild for measurements probing dominant-support beams discovered in Stage~I,
since the corresponding $a_p$ is typically above the noise floor for a large fraction of $p$. Therefore, $\tilde{\psi}_p$ can be interpreted as a {noisy amplitude}
observation of $a_p$, where the effective disturbance is induced by the
Rician fluctuation $\varepsilon_p$ and the square-root mapping. 

From \Cref{eq:app_stage2_psitilde}-\Cref{eq:app_stage2_mvt_bound}, whenever
$\big[a_p^2+\varepsilon_p\big]_+>0$,
the mapping $\tilde{\psi}_p=\sqrt{\big[a_p^2+\varepsilon_p\big]_+}$ induces an
additive amplitude-noise model $\tilde{\psi}_p=a_p+\eta_p$ with a bounded
perturbation controlled by the bound in \Cref{eq:app_stage2_mvt_bound}.
Hence, conditioned on the event $\big[a_p^2+\varepsilon_p\big]_+>0$, the Stage~II
data fit exactly the “noisy amplitude” setting under which SPARTA is empirically
robust \cite{wang2018sparta_tsp}. On the other hand, when $a_p^2+\varepsilon_p\le 0$, the projection in
\Cref{eq:app_stage2_psitilde} clips the pseudo-amplitude to $\tilde{\psi}_p=0$,
which corresponds to below-noise-floor samples; these are naturally suppressed by
SPARTA’s truncation and hard-thresholding mechanism. Therefore, in the
moderate-to-high SNR regime, which is typical for received signal of dominant DFT codewords, the proposed amplitude construction is compatible with SPARTA and
leads to accurate refinement in practice.
\vspace{2mm}
\printbibliography

\end{document}